\newtheorem{theorem}{Theorem}[section]
\newtheorem{corollary}[theorem]{Corollary}
\newtheorem{proposition}[theorem]{Proposition}
\newtheorem{remark}[theorem]{Remark}
\numberwithin{equation}{section}
\numberwithin{figure}{section}
\newcommand{\BM}{{\mathbb B}}
\newcommand{\CM}{{\mathbb C}}
\newcommand{\NM}{{\mathbb N}}
\newcommand{\QM}{{\mathbb Q}}
\newcommand{\RM}{{\mathbb R}}
\newcommand{\SM}{{\mathbb S}}
\newcommand{\ZM}{{\mathbb Z}}
\newcommand{\KM}{{\mathbb K}}
\newcommand{\Ss}{{\mathcal S}}
\newcommand{\Tt}{{\mathcal T}}
\newcommand{\Nn}{{\mathcal N}}
\newcommand{\Ll}{{\mathcal L}}
\newcommand{\Hh}{{\mathcal H}}
\newcommand{\PI}{{\rm \Pi}}
\newcommand{\braket}[2]{\langle #1|#2\rangle}        
\newcommand{\id}{e}
\newcommand{\fgfx}[1]{
  \raisebox{-0.4cm}{\scalebox{0.4}{\includegraphics{#1.png}}}
}
\begin{document}
\title{Topological Spectral Bands with Frieze Groups}

\author[F.R. Lux]{Fabian R. Lux}

\address{Department of Physics and
\\ Department of Mathematical Sciences 
\\Yeshiva University 
\\New York, NY 10016, USA}

\author[T. Stoiber]{Tom Stoiber}

\address{Department of Physics and
\\ Department of Mathematical Sciences 
\\Yeshiva University 
\\New York, NY 10016, USA}

\author[S. Wang]{Shaoyun Wang}

\address{Department of Mechanical and Aerospace Engineering, University of Missouri, Columbia, MO 65211, USA}

\author[G. Huang]{Guoliang Huang}

\address{Department of Mechanical and Aerospace Engineering, University of Missouri, Columbia, MO 65211, USA}

\author[E. Prodan]{Emil Prodan}

\address{Department of Physics and
\\ Department of Mathematical Sciences 
\\Yeshiva University 
\\New York, NY 10016, USA \\
\href{mailto:prodan@yu.edu}{prodan@yu.edu}}

\date{\today}

\begin{abstract} Frieze groups are discrete subgroups of the full group of isometries of a flat strip. We investigate here the dynamics of specific architected materials generated by acting with a frieze group on a collection of self-coupling seed resonators. We demonstrate that, under unrestricted reconfigurations of the internal structures of the seed resonators, the dynamical matrices of the materials generate the full self-adjoint sector of the stabilized group $C^\ast$-algebra of the frieze group. As a consequence, in applications where the positions, orientations and internal structures of the seed resonators are adiabatically modified,  the spectral bands of the dynamical matrices carry a complete set of topological invariants that are fully accounted by the K-theory of the mentioned algebra. By resolving the generators of the K-theory, we produce the model dynamical matrices that carry the elementary topological charges, which we implement with systems of plate resonators to showcase several applications in spectral engineering. The paper is written in an expository style.
\end{abstract}

\thanks{This work was supported by the U.S. National Science Foundation through the grants DMR-1823800 and CMMI-2131760, by U.S. Army Research Office through contract W911NF-23-1-0127 and the German Research Foundation (DFG) Project-ID 521291358.}

\maketitle

\setcounter{tocdepth}{2}

{\scriptsize \tableofcontents}

\section{Introduction}
\label{Sec:Introduction}

It has been long recognized that there is a reciprocal relation between an underlying discrete geometric pattern and the collective dynamical processes supported by that pattern, when the latter is populated with identical self-coupling resonators. It was Jean Bellissard who formalized this statement very precisely, when he introduced $C^\ast$-algebras and their K-theories to the condensed matter community \cite{Bellissard1986} in the context of atomic systems. His observation was that a material always carries a $C^\ast$-algebra, which, quite generally, can be defined as the enveloping $C^\ast$-algebra of the Hamiltonians generating the dynamics of material's degrees of freedom (both classical and quantum) under a precise set of specified experimental conditions. In a Galilean invariant theory, this $C^\ast$-algebra is often completely determined by the underlying discrete physical pattern. For example, Bellissard showed in \cite{Bellissard1986} that, for a material at finite temperature where thermal disorder breaks all crystalline symmetries, the Hamiltonians describing the low energy physics of the electrons can all be drawn from the stabilization\footnote{Stabilization means tensoring with the algebra $\KM$ of compact operators over a separable Hilbert space.} of a crossed product by $\RM^d$, where $d$ is the effective dimension of the material. Reciprocally, any self-adjoint element from this $C^\ast$-algebra can serve as the generator for the electrons' dynamics inside the material. Substantial contributions to this ingenious program were supplied by Kellendonk, who provided further insight into these intrinsic $C^\ast$-algebras \cite{KellendonkRMP95}.\footnote{Notably, the reduction of the crossed products to Morita equivalent groupoid $C^\ast$-algebras.}

An architected material is a synthetic material built with full control over their classical degrees of freedom, including their couplings. In this context, we can choose how to define specific classes of materials as well as the set of experimental conditions the materials will be subjected to. For example, often in the typical applications,  identical resonators are placed in a give architecture and their internal structures ({\it i.e.} mode shapes and force fields) are unrestrictedly modified in order to explore or exploit all dynamical effects offered by the underlying architecture.  In such conditions, the minimal $C^\ast$-algebra covering the dynamical matrices supported by an architecture can be explicitly computed \cite{MeslandJGP2024} in the form of a groupoid $C^\ast$-algebra built entirely and canonically from the architecture of the material. Now, if the $C^\ast$-algebras of two architected materials are found to be isomorphic, then any dynamical feature observed for one material can be reproduced with the other. In other words, from the dynamical point of view, the two architectures are identical. When the architected materials are approached from this angle, the following simple but deep principle emerges: Discovering new dynamical patterns in metamaterials amounts to exploring different stably isomorphic\footnote{Stably isomorphic algebras are said to be Morita equivalent.} classes of $C^\ast$-algebras (see \cite{MeslandJGP2024} for more on this point of view). The present work advances this program by investigating classes of patterns associated with the discrete groups of symmetries of a strip, known as frieze groups \cite{ConwayBook} (see section~\ref{Sec:SymmetricPatterns}).

The operator algebraic approach with its K-theoretic tools introduced by Bellissard has much more to offer \cite{Bellissard1986,BellissardJMP1994,
Bellissard1995,Bellissard2000,Bellissard2003}. Indeed, stably isomorphic or Morita equivalent $C^\ast$-algebras have identical K-theories. Therefore, the K-theoretic groups can be regarded as topological invariants for the deformations of architected materials within one class. Thus, after restricting to a well defined class of architected materials and experimental conditions, therefore to a fix algebra of dynamical matrices, each connected component of the spectrum of a dynamical matrix, aka spectral band, can be labeled by an element of the $K_0$-group of the algebra. This element represents the complete topological invariant associated to the corresponding spectral projection, called band projection from now on. When the abelian $K_0$-group is presented in a basis, then the complete topological invariant breaks down into a finite set of topological charges (see Sec.~\ref{Sec:TopologicalDynamics}). If these topological charges are found to be identical for all band projections of two different dynamical matrices, then we can be sure that the two dynamical matrices can be stably deformed into each other, inside the fixed algebra, without changing the topology of the spectrum. However, if the topological charges are found to be different, then the topology of the spectrum must change during the deformations and, as such, some of the spectral gaps must close, regardless how hard we try to avoid this phenomenon. This supplies a very simple, robust and practical principle for resonant spectrum engineering of metamaterials, because it supplies the means to close and open spectral gaps on demand, without fine-tuning. Furthermore, it gives us insight into the dynamical features that can be achieved with a material from a given a class and in a given set of experimental conditions, in the sense that all such materials deliver wave-channels that carry a finite number of topological charges that are conserved when bands collide or split as a result of continuous deformations of the material. In fact, if one is interested only in dynamical effects that are robust against continuous deformations of the materials, referred to here as topological dynamics, then all such dynamical effects can be reproduced by stacking elementary models that carry the fundamental topological charges. 

To put the above principles at work in real world applications, one needs to: 
\begin{enumerate}[\ \ $N1$.]
\item Specify the protocol generating the class of materials; 
\item Specify the allowed deformations of the materials; 
\item Compute the corresponding unique $C^\ast$-algebra of dynamical matrices; 
\item Compute the $K_0$-group of this algebra, together with an explicit basis; 
\item Supply the model dynamical matrices corresponding to this basis. 
\end{enumerate}
If all these components are in place, then a theorist can advise an experimental research group on how to reconfigure a meta-material to produce dynamical effects that are stable against deformations of the materials, such as spectral flows that close and open specific spectral gaps. In a different scenario, one may be dealing with a complicated dynamical matrix coming from a laboratory, in which case one will be interested to resolve the complete topological invariant of the band projections. In this scenario, one needs:

\begin{enumerate}[\ \ $N6.$]
\item A practical algorithm to compute the $K$-theoretic labels. 
\end{enumerate} 

The structures of the frieze groups are relatively simple and the program outlined above can be completed in its entirety. This created for us the opportunity to present here a model of analysis that could be of some guidance for mathematical physicists seeking collaborations with metamaterial engineers and, reciprocally, for material engineers seeking inspiration from this type of works. Along the way, we intend to advertise several tools specific to operator algebras, which not only offer effective vehicles for computations but also natural frameworks that self-explain the purpose of a calculation and, in the same time, guide one with what needs to be computed. Specifically: 1-2) We introduce an algorithm that generates a well defined class of architected materials by acting with the space transformations of a frieze group on a set of seeding resonators. 3) Under the assumption that the interactions between pairs of resonators are fully determined by their relative geometric configuration, as it is always the case if the physics involved is Galilean invariant, we show that the dynamical matrices governing the collective dynamics of the resonators belong to the stabilized group $C^\ast$-algebra of the corresponding frieze group. Reciprocally, in experimental conditions where the internal structures, positions and orientations of the seed resonators are unrestrictedly modified, than the self-adjoint sector of this algebra is fully sampled by the dynamical matrices of the architected materials. 4-5) We supply a complete account of the $K_0$-theory of this algebra for all seven classes of frieze groups, together with explicit sets of generators of the $K_0$-groups, as well as model dynamical matrices. In the process, we advertise the Baum-Connes machinery \cite{BaumIHES1982,BaumProceedings1988,Baum1987,
BaumContMath1994,BaumEM2000}. 6) For a generic dynamical matrix, we also supply a numerical algorithm for computing the complete set of K-theoretic labels of the band projectors. In the process, we advertise Kasparov's bivariant K-theory and its internal product \cite{KasparovJSM1981,KasparovJSM1987} as the natural framework to conceptualize the process of resolving the K-theoretic labels. The resulting algorithm never makes appeal to a Bloch decomposition since it is entirely developed inside the real-space representation. Among other things, we believe this approach can handle disordered perturbations. 7) Lastly, we implement the model dynamical matrices using phononic crystals and we use COMSOL\footnote{COMSOL Multiphysics is a widely adopted commercial platform for finite element analysis, known to produce reliable simulations of continuous mechanics models.} simulations to demonstrate various topological spectral flows generated with the principles discussed above. Let us emphasize that we make no attempts to enforce any of the fundamental symmetries, that is, the time-reversal, particle-hole or chiral symmetries, which require special experimental conditions (see \cite{BarlasPRB2018}).

We now discuss the relation between our work and the existing literature. We start with a brief survey of the physics literature. The first indication that a space symmetry can enrich the topological dynamics of a material appeared in \cite{FuPRL2011}. The first indication that a space symmetry {\it alone} can stabilize topological phases in a material came from Refs.~\cite{TurnerPRB2010,HughesPRB2011}. Prior to these works, a topological phase was synonymous with a non-trivial (strong) bulk-boundary correspondence, but an alternative definition was put forward in \cite{TurnerPRB2010,HughesPRB2011}, saying that a band insulator is in a topological phase if it cannot be adiabatically deformed to its atomic limit. This concept evolved in what is today known as topological quantum chemistry \cite{PoNatComm2017,BradlynNature2017,CanoPRL2018,VergnioryPRE2017}, which perhaps can be defined as the science of identifying topological bands in stoichiometric condensed substances. It comes with fine tools that have been combined with first-principle computer simulations of quantum solids to assess the topology of the energy bands in large classes of stoichiometric materials \cite{WiederScience2018,VergnioryNature2019,ElcoroNatComm2021,
WiederNatRevMat2022,VergnioryScience2022}. The topological criterium in these works is the pure homotopy equivalence of the bands, while ours is stable homotopy. The latter allows new degrees of freedom to participate in the deformation processes of the models and, as a result, our predictions hold even when the physical systems interact with external structures, such as supporting frames. We want to make it clear, though, that we are not interested here in the classification of topological phases but rather in the dynamics of small well defined classes of materials. More precisely, we want to make ourself useful to the metamaterial community by developing bottom-up design approaches that deliver specific topological dynamics in concretely specified experimental conditions, as an alternative to the top-bottom approach in \cite{WiederScience2018,VergnioryNature2019,ElcoroNatComm2021,
WiederNatRevMat2022,VergnioryScience2022}, where large libraries of existing materials were scrutinized for topological bands. Of course, the latter still represent an extremely valuable source of information for meta-material scientists.

We now turn our attention to the mathematical physics literature, specifically to the works by Shiozaki et al \cite{ShiozakiPRB2014,ShiozakiPRB2016,ShiozakiPRB2017}, which initiated the systematic applications of topological K-theoretic methods to systems with space group symmetry. These works include the enrichment by the fundamental symmetries and treat all Wyckoff positions at once. It is shown there that, after a Bloch decomposition, the spectral bands are classified by twisted equivariant K-theories (as introduced in \cite{FreedAHP2013}; see also \cite{GomiSIGMA2017}) over the Brillouin torus w.r.t. to actions of the point group. These K-theories are carefully formulated in generic setting and \cite{ShiozakiPRB2017} explicitly computes them for all 17 wallpaper groups in class A and class AIII. The results revealed an extremely rich landscape of topologically distinct phases, which can be a great source of inspiration not only for electronic materials but also for metamaterials where dynamics is carried by classical degrees of freedom. Refinements, extensions and subtle new point of views have been further provided in \cite{GomiLMP2019,GomiJGP2019,GomiIJM2021,ShiozakiPTEP2023}, just to mentioned a few works from a rapidly growing body of rigorous publications (see also \cite{ShiozakiArxiv2023} for an impressive tour de force). Certainly, the field is at a point where massive and effective deployment of these abstract predictions might be witness in laboratories in the coming years (see the frieze acoustic crystals in \cite{MooreJASA2024}). The purpose of our paper is precisely to demonstrate how this could happen in the context of architected materials. As already inferred in our list $N1$-$N6$, accent is put on what kind of information and tools are needed to make that happen and what could be the practical issues that need attention at the fundamental level. For example, one such issue is the stability of the predictions in the presence of fabrication imperfections. We foresee the operator algebras and operator K-theory, as opposed to topological K-theory, as the vehicles to put this issue under control. Furthermore, as projected by our list $N1$-$N6$, establishing the isomorphism classes of the K-groups is only a small piece of the information needed to produce results in a laboratory. Even for the simple case of frieze groups, the second part of $N4$, $N5$ and $N6$ are missing or are scattered in the published literature.

The pure mathematics literature abounds with results on the K-theories of group $C^\ast$-algebras. One of the most effective tools of computations is the Baum-Connes assembly map \cite{BaumIHES1982,BaumProceedings1988,Baum1987,
BaumContMath1994,BaumEM2000}, which more often than not reduces the task to computing equivariant K-homologies of topological spaces (see \cite{AparicioANC2019} for a status report and \cite{GomiIJM2021} for a direct application related to ours). Davis and L\"uck \cite{DavisLueck} have unified several existing assembly maps, which then enabled L\"uck and Stamm \cite{LueckStamm} to formally derive the K-theories of all crystallographic groups in arbitrary dimensions and, explicitly, the K-theories of all wallpaper groups (see also \cite{YangThesis}). While we hope that these results and methods will soon become a major source of inspiration in materials science, we will use them here only as a reference, as we will showcase simpler but less powerful computations based on the equivariant Chern character \cite{Baum1987,Lueck2002} (see Th. 6.1 of \cite{Mislin03}), which can only resolve the non-torsion component of the $K_0$-groups.\footnote{This is entirely sufficient in our case.} As already pointed out in \cite{Echterhoff2010}, the Baum-Connes machinery is not always an effective tools in this respect. For crystallographic groups of low dimensions, the methods based on non-commutative CW-complexes developed in \cite{McAlisterPhD2005} are capable to deliver both the $K_0$-groups and sets of generators. Other more direct methods for specific crystallographic groups can be found in \cite{Cuntz82,HadfieldJOP2004}. We will use some of these sources when listing the generator of the $K_0$-groups of the frieze group $C^\ast$-algebras. While this is sufficient for the simple frieze groups invoked by our study, more systematic methods for resolving the bases of the $K_0$-groups are yet to be developed for the space groups in higher dimensions.

\section{A Class of Architected Materials}
\label{Sec:SymmetricPatterns}

A strip is a quasi one-dimensional slice of the Euclidean plane, such as $\Ss := \RM \times [-1,1]$, equipped with the inherited Euclidean metric. 
Its complete group of isometries is generated by continuous translations along the long axis and reflections relative to the horizontal and vertical axes. This continuous group of isometries accepts seven distinct discrete subgroups, the frieze groups \cite{ConwayBook}. This section supplies a brisk review of the frieze groups and introduces our specific algorithm that creates architected materials using actions of the frieze groups. The section also analyzes the dynamical matrices of these materials.

\subsection{Discrete symmetries of a strip}\label{Sec:frieze} The seven discrete subgroups of the full group of isometries of the strip are all listed in table~\ref{Ta:1}, together with their isomorphism classes, sets of generators, and actions on the strip. 
In this table, $\tau$ is the translation by one unit along the strip, while $\sigma_h$ and $\sigma_v$ are the reflections against the horizontal and vertical axes of the strip, respectively.

\vspace{0.1cm}

{\small
\begin{table}
  \caption{\small List of the frieze groups, their isomorphism classes, generators and actions on the plane.}
  \centering
  \begin{tabular}{ccclccc}
  \toprule
  Iso. class & \ \ \ \  $\Gamma$ \ \ \ \ & \ \ \ \ Generators \ \ \ \ & Orbifold    \\ \midrule  \addlinespace[0.5em]
  $\mathrm{Z}_\infty$  & $p1$ &   $\tau$ & \fgfx{f_p1}     \\  \addlinespace[0.5em]
  & $p11g$ &  $\tau_{1/2} \sigma_v$ & \fgfx{f_p11g}  \\ \addlinespace[0.5em] \midrule
 $\mathrm{Dih}_\infty$  & $p1m1$ & $\tau,\sigma_v$ & \fgfx{f_p1m1}    \\ \addlinespace[0.5em]
  & $p2$ & $\tau,\sigma_h \sigma_v$ & \fgfx{f_p2}   \\ \addlinespace[0.5em]
  & $p2mg$ & $\tau_{1/2}\sigma_h, \sigma_v$ & \fgfx{f_p2mg}  \\ \midrule \addlinespace[0.5em]
  $\mathrm{Z}_\infty \times \mathrm{Dih}_1$  & $p11m$ & $\tau, \sigma_h$ & \fgfx{f_p11m}   \\ \midrule \addlinespace[0.5em]
  $\mathrm{Dih}_\infty \times \mathrm{Dih}_1$   & $p2mm$ & $\tau, \sigma_h, \sigma_v$ & \fgfx{f_p2mm}   \\ \midrule \addlinespace[0.5em]
  \end{tabular}
  \label{Ta:1}
\end{table}
}

The structure of the frieze groups can be summarized in a concise form by presenting each of them in terms of their generators and relations:
\begin{enumerate}
\item $ p1  = \langle u \rangle,  \ u = \tau;$
\item $p11g  = \langle u \rangle,  \ u = \tau_\frac{1}{2} \sigma_v;$ 
\item $p1m1  = \langle u,v_1 ~|~ v_1^2, \ (u v_1)^2\rangle, \ u = \tau, \ v_1 =\sigma_v;$
\item $p2 = \langle u,v_1 ~|~ v_1^2, \ (u v_1)^2\rangle, \ u = \tau, \ v_1=\sigma_h\sigma_v$;
\item $p2mg = \langle u,v_1 ~|~ v_1^2, \ (u v_1)^2\rangle, \ u = \tau_{1/2} \sigma_h, \  v_1 = \sigma_v$;
 \item $p11m  = \langle u,v_2 ~|~ v_2^2, \ u v_2 u^{-1} v_2 \rangle, \ u = \tau, \ v_2 =\sigma_h;$
 \item $p2mm  = \langle u,v_1,v_2 ~|~ v_j^2, (v_1 v_2)^2, (u v_1)^2, u v_2 u^{-1} v_2\rangle,  \ u = \tau, \ v_1 = \sigma_v, \  v_2 = \sigma_h.$
\end{enumerate}
Thus, the discrete subgroups fall into four isomorphism classes, as already indicated in Table~\ref{Ta:1}. Furthermore, we can use a uniform notation to specify an element of any of the frieze groups as $u^n v_1^{\alpha_1} v_2^{\alpha_2} : = u^n v^\alpha$, where it is understood that, for example for $p1m1$ group, $\alpha_2$ takes only value 0. Here $\alpha=(\alpha_1,\alpha_2)$ and $v^\alpha = v_1^{\alpha_1} v_2^{\alpha_2}$. In this presentation, the composition rule is simply
\begin{equation}\label{Eq:Compo1}
\big (u^n v^\alpha\big) \cdot \big ( u^m v^\beta \big) = u^{n+(-1)^{\alpha_1} m} \; v^{(\alpha + \beta){\rm mod}2},
\end{equation}
and inversion
\begin{equation}\label{Eq:Inversion}
(u^n v^\alpha)^{-1} = u^{(-1)^{\alpha_1}} v^\alpha.
\end{equation}
The above rules supply a practical way to encode the tables of the frieze groups, which will be employed by our numerical simulations as explained in section~\ref{Sec:Numerical}.

\subsection{The class of architectures and the assumed experimental conditions}\label{Sec:Gen} The protocol that defines the architected materials we want to investigate consists of the following steps:
\begin{enumerate}[P1.]
\item Start with a finite number $N_s$ of seeding discrete resonators, placed in a strip at desired non-overlapping locations and with desired orientations;
\item Apply a plane transformation from a chosen frieze group to each of the seed resonators.
\item Replenish the seed resonators and repeat step two for all the plane transformations contained in the frieze group.
\item Adjust the seeds if steps 1-3 result in overlapings and repeat the process.
\end{enumerate}

\begin{figure}
\center
  \includegraphics[width=0.99\linewidth]{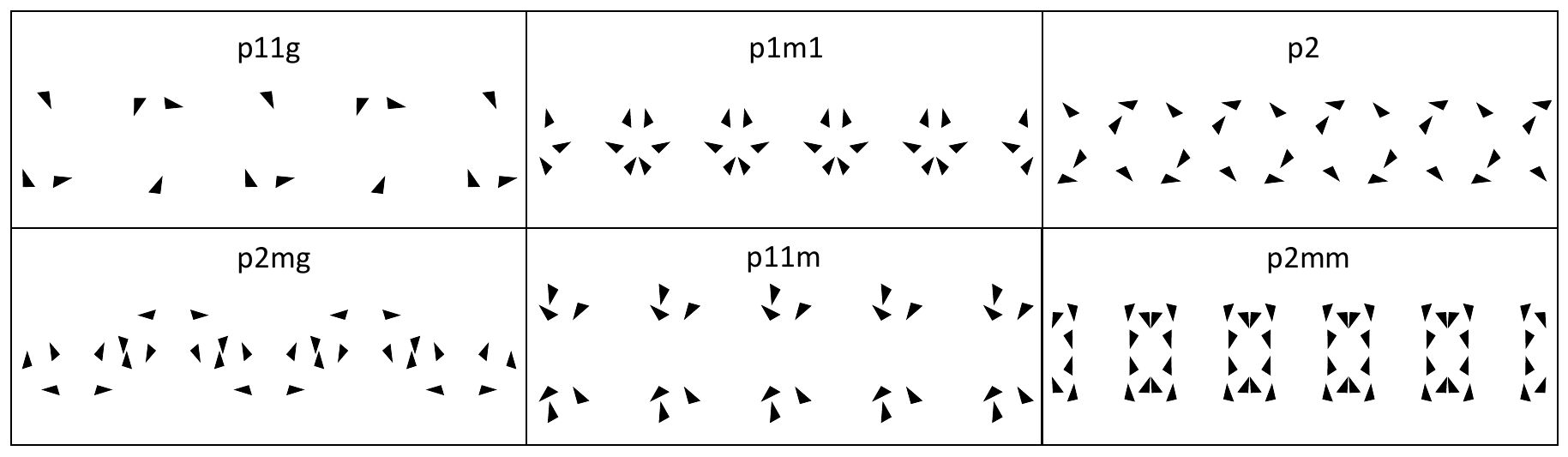}
  \caption{\small Examples of architectures generated from three seed resonators.}
 \label{Fig:1}
\end{figure}

Examples of symmetric architectures generated from three seed triangles are shown in Fig.~\ref{Fig:1}. 

\begin{remark}{\rm There are several observations to be made about {\it our} class of materials. Firstly, the resulting architectures are symmetric w.r.t. the frieze group engaged by the protocol. However, as stated in our last rule, we will be avoiding those particular seed configurations that lead to overlappings and, as such, the seeds are always placed at general Wyckoff positions (i.e. those points of the strip that are fixed only by the action of the identity operation of the frieze group). Of course, since the number of seeds can be arbitrarily large, a finite number of them can assemble into a super-seed with center at a special Wyckoff position. Still, this is a special situation because the site group of that Wyckoff position acts freely on the local modes. The bottom line is that, by choice, there are symmetric architectures that are not included in {\it our} class of materials. If we insist in including those, then the algebra of dynamical matrices, for those particular cases, drops from a group to a groupoid $C^\ast$-algebra \cite{MeslandJGP2024}.
}$\Diamond$
\end{remark}

We also need to specify the experimental conditions. For this, note that any of the proposed architected materials is completely determined by the shape, position, orientation and internal structures of the seeding resonators. We will assume that the configuration space defined by the factors we just listed is fully and unrestrictidely explored during the experimental applications, within the bounds imposed by the protocol P1-P4.

\subsection{Dynamical matrices} The resonators  can be steel plates embedded in an elastic medium, acoustic cavities, or magnetized solid bodies (see Example~2.6 in \cite{MeslandJGP2024}, \cite{MooreJASA2024} and section~\ref{Sec:PlateResonators} for concrete examples). Each resonator in the architecture can be conveniently and uniquely labeled by $(i,f) \in \{1,\ldots,N_s\} \times F$, where $i$ corresponds to the seed resonator on which the transformation $f$ was applied. We emphasize that this is a consequence of our decision to exclude any overlap of the resonators. Each resonator carries a finite number $K$ of relevant internal resonant modes. These modes will always be observed, measured and quantified in a frame rigidly attached to the resonator. As a result, these observations and measurements are insensitive to the Euclidean transformations applied on the resonators and this enables us to choose the bases for the internal spaces of resonators in a coherent fashion. Specifically, for the seed resonator $i$, we choose once and for all a basis for its space of internal resonant modes, which we denote by $|\alpha,i,e\rangle$, with $\alpha =\overline{1,K}$ and $e$ the neutral element of the frieze group $F$. When the seed resonator is acted with a plane transformation $f$ from the frieze group and placed to its rightful configuration in the architectures, the internal space can be rendered in the basis $|\alpha,i,f\rangle$, $\alpha =\overline{1,K}$, which coincides with the modes $|\alpha,i,e\rangle$ when observed from the intrinsic frame. 

\begin{remark}{\rm This coherent labeling of the modes always exist for our special class of materials. For atomic systems, however, the orbitals are always observed and quantified in the laboratory frame, hence quite differently from what we are proposing here for classical resonators.
}$\Diamond$
\end{remark} 

The collective dynamics of a pattern of resonators takes place inside the Hilbert space $\CM^N \otimes \ell^2(F)$, $N = N_s \, K$, spanned by the vectors $\xi \otimes |f\rangle$ with $\xi \in {\rm Spann}\{|\alpha,i\rangle, \ \alpha=\overline{1,K}, \ i = \overline{1,N_s} \} \simeq \CM^N$ and $f \in F$. Given our convention, a plane transformation from the frieze group acts as 
\begin{equation}\label{Eq:PlaneTf}
T_{f'} (\xi \otimes |f\rangle) = \xi \otimes |f' f \rangle .
\end{equation}
It is important to note that the vector $\xi$ is not acted on by the transformation. Now, like any bounded operator over $\CM^N \otimes \ell^2(F)$, a dynamical matrix assumes the generic form
\begin{equation}\label{Eq:GenD}
H = \sum_{f,f' \in F} w_{f,f'} \otimes |f \rangle \langle f' |, \quad w_{f,f'} \in M_{N}(\CM),
\end{equation}
where $M_{N}(\CM)$ denotes the algebra of $N \times N$ matrices with complex entries. As an experimental fact, the coupling matrix $w_{f,f'}$ becomes weaker as the distance between a pair $(f,f')$ of resonators is increased and in fact the coupling matrix cannot be experimentally resolved beyond a certain separation distance. Thus, for typical experiments, it is justified to assume that there is a finite number of terms in Eq.~\ref{Eq:GenD} involving one $f \in F$.

\vspace{0.1cm}

If the physical processes involved in the coupling of resonators are Galilean invariant, then the coupling matrices must satisfy the following symmetry constraints
\begin{equation}\label{Eq:Invariance}
w_{g \cdot f, g \cdot f'} = w_{f,f'}, \quad \forall \ f,f',g \in F.
\end{equation}
It is quite important to understand that this is not an assumption but rather a physical reality.  Indeed, let  there be some underlying Galilean invariant physical laws $\Ll \mapsto \{w_{f,f'}(\Ll)\}$ that associate to each admissible architecture $\Ll$ a family of coupling matrices. Then, if $g$ is any Euclidean transformation, we must have
\begin{equation}
w_{g \cdot f, g \cdot f'}(g \cdot \Ll) = w_{f,f'}(\Ll),
\end{equation}
because any instrumentation rigidly attached to the frame of the resonators will not be able to sense the Euclidean drift $g$ of the entire platform. Now, if $g$ belongs to the frieze group that produced the pattern in the first place, then $g \cdot \Ll = \Ll$ and the Eq.~\eqref{Eq:Invariance} follows.

\begin{remark}{\rm As opposed to atomic systems, no additional conjugation with a representation of $g$ on the space $\CM^N$ is needed in Eq.~\eqref{Eq:Invariance}, because the pair of resonators and their resonant modes are rigidly rotated by $g$ and the modes are observed and quantified from frames rigidly attached to the resonators. Another way to justify the absence of such conjugation is via the specific action~\eqref{Eq:PlaneTf} of the plane transformations on our basis.
}$\Diamond$
\end{remark} 

As a direct consequence of Eq.~\eqref{Eq:Invariance}, any dynamical matrix over one of our symmetric patterns can be always reduced to the following particular form
\begin{equation}\label{Eq:ReducedH}
H = \sum_{f,f' \in F} w_{1,f^{-1} \cdot f'} \otimes |f \rangle \langle f' |.
\end{equation}
This is already a strong indication that the dynamical matrices of our patterns of resonators form a sub-algebra of the bounded operators over $\CM^N \otimes \ell^2(F)$. Computing this sub-algebra is the subject of the next section.

\begin{remark}{\rm The algebra $\BM\big (\ell^2(F)\big )$ of bounded operators over $\ell^2(F)$ is very large and it has a trivial K-theory. Thus, in order to predict topological phenomena in the dynamics of the resonators, it is paramount to demonstrate that the algebra of dynamical matrices is actually a separable $C^\ast$-subalgebra of $\BM\big (\ell^2(F)\big )$.
}$\Diamond$
\end{remark}

\section{The $C^\ast$-algebras of dynamical matrices}
\label{Sec:GCAlg}

This section establishes the connection between the abstract $C^\ast$-algebra of the frieze groups and the algebra of dynamical matrices of the class of architected materials introduced in the previous section.

\subsection{Group $C^\ast$-algebras: Elementary facts and notations}\label{Sec:CG} There will be several groups involved in our analysis and, for this reason, we want to introduce some uniform notation that can be used exchangeably among the groups. The standard material of this subsection can be found, for example, in \cite{DavidsonBook}[Ch.~VIII].

\vspace{0.1cm}

Henceforth, given a generic discrete group $G$, its group algebra $\CM G$ consists of formal series
\begin{equation}\label{Eq:CG}
q = \sum_{g \in G} \alpha_g \, g, \quad \alpha_g \in \CM,
\end{equation}
where all but a finite number of terms are zero. 
The addition and multiplication of such formal series are defined in the obvious way, using the group and algebraic structures of $G$ and $\CM$, respectively. 
In addition, there exists a natural $\ast$-operation
\begin{equation}
q^\ast = \sum_{g \in G} \alpha_g^\ast \, g^{-1}, \quad (q^\ast)^\ast = q, \quad (\alpha q)^\ast = \alpha^\ast q^\ast, \ \alpha \in \CM.
\end{equation}
Hence, $\CM G$ is a $\ast$-algebra in a natural way.

\vspace{0.1cm}

We denote by $e$ the neutral element of $G$. Then the map
\begin{equation}
\Tt : \CM G \to \CM, \quad \Tt(q) = \alpha_e
\end{equation}
defines a faithful positive trace on $\CM G$ and a pre-Hilbert structure on $\CM G$ via
\begin{equation}
\langle q, q' \rangle : = \Tt(q^\ast q'), \quad q,q' \in \CM G.
\end{equation}
The completion of the linear space $\CM G$ under this pre-Hilbert structure supplies the Hilbert space $\ell^2(G)$, spanned by $|g\rangle$, $g \in G$, which form an orthonormal basis
\begin{equation}
\langle g , g' \rangle = \delta_{g,g'}, \quad g,g' \in G.
\end{equation} 
The left action of $\CM G$ on itself can be extended to the action of a bounded operator on $\ell^2(G)$, and this supplies the left regular representation $\pi_L$ of $\CM G$ inside the algebra $\BM\big(\ell^2(G)\big)$ of bounded operators. Specifically,
\begin{equation}\label{Eq:PiLeftReg}
\pi_L(q) |g\rangle = \sum_{g'\in G} \alpha_{g'} |g' g \rangle = \sum_{g'\in G} \alpha_{g'g^{-1}} |g' \rangle, \quad q \in \CM G, \quad g \in G,
\end{equation} 
where $q = \sum_{g' \in G} \alpha_{g'}  g'$ with $\alpha_{g'} \in \CM$.
The completion of $\CM G$ with respect to the norm
\begin{equation}\label{Eq:CNorm}
\|q\| : = \|\pi_L(q)\|_{\BM(\ell^2(G))}
\end{equation}
supplies the reduced group $C^\ast$-algebra $C^\ast_r(G)$ of $G$.

\begin{remark}{\rm As we shall see, the stabilization of this group algebra represents the space where the dynamical matrices live and are deformed. The norm~\eqref{Eq:CNorm} puts a topology on this space, which gives a rigorous meaning to the wording ``continuous deformations'' used up to this point in an intuitive but very imprecise mode. 
}$\Diamond$
\end{remark}

\begin{remark}{\rm For any state on a $\ast$-algebra, there are left and right GNS representations. The right GNS representation $\pi_R$ corresponding to $\Tt$ has the same Hilbert representation space, but the elements act from the right (see Eq.~\eqref{Eq:PiRightReg}).
}$\Diamond$
\end{remark}

\begin{remark}\label{Re:TT}{\rm The following detail is worth mentioning. If the group $G$ is finite, then
\begin{equation}
\Tt(q) = \frac{1}{|G|} {\rm Tr}\big ( \pi_L(q) \big ) = \frac{1}{|G|} {\rm Tr}\big ( \pi_R(q) \big ),
\end{equation}
which can be seen straight from \eqref{Eq:PiLeftReg}.
}$\Diamond$
\end{remark}

Since we only deal here with discrete amenable groups, the reduced and full group $C^\ast$-algebras coincide. This is important because the relation between the group and its $C^\ast$-algebra is functorial only for the latter. So, in the present context, we have the following simple but essential statement:

\begin{proposition}[\cite{NassopoulosIJCMS2008}]\label{Pro:LiftMor} Let $G$ and $H$ be discrete amenable groups and $\rho : G \to H$ be a group morphism. Then $\rho$ lifts to a $C^\ast$-algebra morphism $C^\ast_r(G) \to C^\ast_r(H)$, extending the map between group algebras
\begin{equation}
\CM G \ni \sum_{g \in G} \alpha_g \, g \mapsto \sum_{g \in G} \alpha_g \, \rho(g) \in \CM H.
\end{equation}
\end{proposition}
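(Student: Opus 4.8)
The plan is to build the $C^\ast$-morphism in two stages: first extend $\rho$ to a $\ast$-homomorphism between the group $\ast$-algebras, then show this extends continuously to the $C^\ast$-completions by a norm-nonincreasing property. The first stage is purely algebraic: the prescribed map $\CM G \to \CM H$, $\sum_g \alpha_g\, g \mapsto \sum_g \alpha_g\, \rho(g)$, is clearly linear, it respects the $\ast$-operation since $\rho(g^{-1}) = \rho(g)^{-1}$ and the coefficients are conjugated on both sides, and it is multiplicative because $\rho$ is a group homomorphism — the convolution product on $\CM G$ is defined entirely through the group law of $G$, which $\rho$ intertwines with that of $H$. (One should note the map need not be injective, since $\rho$ need not be; this is fine, morphisms of $C^\ast$-algebras are not required to be injective.) Denote this $\ast$-homomorphism by $\rho_\ast : \CM G \to \CM H \subseteq C^\ast_r(H)$.

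The second stage is to show $\|\rho_\ast(q)\|_{C^\ast_r(H)} \le \|q\|_{C^\ast_r(G)}$ for every $q \in \CM G$; continuity then lets us extend $\rho_\ast$ uniquely to $C^\ast_r(G)$, and the extension is automatically a $\ast$-homomorphism since the algebraic identities pass to limits. Here is where amenability enters. For the \emph{full} group $C^\ast$-algebras the inequality is automatic — in fact an equality of the relevant universal norms — because the full norm is the supremum over all $\ast$-representations, and composing a representation of $\CM H$ with $\rho_\ast$ produces a representation of $\CM G$; this is the standard functoriality of $G \mapsto C^\ast(G)$. The point of the amenability hypothesis, invoked in the sentence just before the proposition, is precisely that $C^\ast_r(G) = C^\ast(G)$ and $C^\ast_r(H) = C^\ast(H)$, so the reduced-algebra statement reduces to the universal-algebra statement. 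Concretely: the left regular representation $\pi_L^H$ of $H$ composed with $\rho_\ast$ is \emph{some} $\ast$-representation of $\CM G$ on $\ell^2(H)$, hence its operator norm is bounded by the full norm of $G$, which by amenability equals $\|q\|_{C^\ast_r(G)}$; but that composed representation has operator norm exactly $\|\pi_L^H(\rho_\ast(q))\| = \|\rho_\ast(q)\|_{C^\ast_r(H)}$.

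The main obstacle — and it is a genuine one, not a formality — is exactly the step that fails without amenability: for a general discrete group, $\rho_\ast$ need \emph{not} extend to a bounded map $C^\ast_r(G) \to C^\ast_r(H)$, the standard counterexample being a non-amenable group $G$ mapping onto the trivial group, where $C^\ast_r(G)$ has a nontrivial reduced norm but the induced map would have to factor through $C^\ast_r(\{e\}) = \CM$, i.e. through the trivial representation, which is not weakly contained in the regular representation. So the crux of the write-up is to make visible that the only substantive input is the coincidence of full and reduced norms for amenable groups (a theorem one may cite, e.g. from \cite{DavidsonBook}), after which everything is the soft functoriality of the full group $C^\ast$-algebra construction. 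I would present it in that order — algebraic extension, then the amenability reduction, then the norm estimate via the universal property — flagging the non-amenable obstruction as the reason the hypothesis cannot be dropped, and cite \cite{NassopoulosIJCMS2008} for the packaged statement.
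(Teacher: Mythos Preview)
Your argument is correct and is the standard one: lift $\rho$ to a $\ast$-homomorphism of group algebras, then use amenability to identify reduced and full group $C^\ast$-algebras so that the universal property of the full norm gives the required contractivity. Note, however, that the paper does not actually supply its own proof of this proposition; it simply cites \cite{NassopoulosIJCMS2008} for the packaged statement, so there is nothing to compare against beyond confirming that your write-up matches the standard functoriality argument underlying that reference.
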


\subsection{Symmetry and structure of dynamical matrices} As we have seen, the group $G$ acts on itself via an action either from the left or from the right. The symmetry operations associated to the group elements are implemented via the left-regular representation:
\begin{equation}
U : G \to \BM\big ( \ell^2(G) \big ), \quad U_g |g' \rangle : = \pi_L(g)|g'\rangle = |gg'\rangle.
\end{equation}
The right regular representation of the group algebra acts as
\begin{equation}\label{Eq:PiRightReg}
\pi_R(q) |g'\rangle : = \sum_{g\in G} \alpha_{g} |g' g^{-1} \rangle = \sum_{g\in G} \alpha_{g^{-1} g'} |g \rangle, \quad q \in \CM G, \quad g' \in G.
\end{equation} 
Since the left and right group actions commute, we automatically have that
\begin{equation}
U_g^\ast \, \pi_R(q) \, U_g = \pi_R(q), \quad q \in C^\ast_r(G), \quad g \in G.
\end{equation}
Hence, the group $C^\ast$-algebra supplies models that are naturally symmetric w.r.t. the group $G$, via a mechanism that rests entirely on the associative property of the group multiplication. 

\begin{remark}{\rm The following simple observation establishes a useful link between projections and representations. Suppose $p$ is a projection from $C^\ast_r(G)$. Then $\pi_R(p)\ell^2(G)$ is an invariant subspace for the group symmetries
\begin{equation}
U_g \big [ \pi_R(p)\ell^2(G) \big ] = \pi_R(p)\ell^2(G), \quad \forall \ g \in G.
\end{equation}
As such, $g \mapsto  \pi_L(g) \pi_R(p)$ supplies a representation of $G$ on the subspace $\pi_R(p)\ell^2(G)$. This simple observation will be exploited here quite often.
}$\Diamond$
\end{remark}

We now specialize the discussion to the case when $G$ is the frieze group generating a class of architected materials. The  right regular representations of $C^\ast_r (F)$,
\begin{equation}
\pi_R\Big (\sum_f \alpha_f \, f\Big ) = \sum_{f,f' \in F} \alpha_{f^{-1}f'} |f\rangle \langle f'|,
\end{equation}
can be naturally extended to a representation of the algebra $M_N(\CM) \otimes C^\ast_r(F)$ over the Hilbert space $\CM^N \otimes \ell^2(F)$. Specifically, an element from $M_N(\CM) \otimes C^\ast_r(F)$ takes the same form $\sum_f \alpha_f \, f$, but with coefficients $\alpha_f$ from $M_N(\CM)$, and the extension of the right regular representation is 
\begin{equation}\label{Eq:RRRep}
\pi_R\Big (\sum_f \alpha_f \, f\Big ) = \sum_{f,f' \in F} \alpha_{f^{-1}f'} \otimes |f\rangle \langle f'|.
\end{equation}
Similarly, the group symmetries $U_f$ can be lifted as $I \otimes U_f$ on the Hilbert space $\CM^N \otimes \ell^2(F)$ by letting $F$ act trivially on $\CM^N$.

Now, comparing the generic form Eq.~\eqref{Eq:ReducedH} of the dynamical matrices for Galilean invariant theories with the right regular representation~\eqref{Eq:RRRep}, we see that the two coincide if we take $w_{f,f'}=\alpha_{f^{-1}f'}$. Furthermore, when this correspondence is applied on elements from $\CM F$, it supplies dynamical matrices with finite coupling ranges. The important conclusion is that {\it all} dynamical matrices of the patterns of resonators proposed by us can be generated from the right regular representation of $M_N(\CM) \otimes C^\ast_r(F)$, for some finite but otherwise arbitrary $N$, and, reciprocally, the dynamical matrices that can be engineered in a laboratory densely sample the self-adjoint sector of $\KM \otimes C^\ast_r(F)$. Furthermore, we can be sure that our allowed deformations of the physical systems, which involve changing the shapes, positions and orientations, as well as the internal structures of the seeding resonators, take place inside the stabilization of the $C^\ast$-algebra $C^\ast_r(F)$.

\begin{remark}{\rm Throughout, we will reserve the symbols $\pi_R$, $\Tt$ and $U$ for the objects introduced in this section, making sure that the group $G$ can be clearly read off from the context. In fact, since the left regular representation is only used in tandem with $U$, we simplify the notation for the right regular representation to $\pi$.
}$\Diamond$
\end{remark}

For the practical purposes of this paper, it is useful to merge the general discussion and the specifics of the frieze groups presented in subsection~\ref{Sec:frieze}. Using the uniform notation introduced there, the elements of $M_{N}(\CM) \otimes C^\ast_r(F)$ can be presented as norm-convergent infinite series
\begin{equation}\label{Eq:Elem1}
\sum_{n \in \ZM} \sum_{\alpha \in \{0,1\}^2} w(n,\alpha) u^n v^{\alpha},
\end{equation}
with coefficients $w(n,\alpha)$ from $M_{N}(\CM)$. Given the rule~\eqref{Eq:PiRightReg}, the right regular representation of an element $h$ as in Eq.~\eqref{Eq:Elem1} has matrix elements
\begin{equation}\label{Eq:MatElem}
\langle i,  \big ( u^m v^\beta \big) \cdot \big (u^n v^\alpha \big)^{-1} | \pi(h) |j, u^m v^\beta \rangle = w_{ij}(n,\alpha).
\end{equation}
By sampling all allowed values of the labels $i$, $j$, $(n,\alpha)$ and $(m,\beta)$, Eq.~\eqref{Eq:MatElem} returns all the matrix elements of the element $h$.

\section{Topological dynamics by \texorpdfstring{$K$-theory}{K-theory}} 
\label{Sec:TopologicalDynamics}

In $K_0$-theory, two projections from $\KM \otimes C^\ast_r(F)$ are declared equivalent if one can find a continuous family of projections inside the same algebra, interpolating between the two projections. Note that, during this interpolation, the number of engaged local degrees of freedom can change and, as such, the mentioned interpolations are rightfully called stable homotopies. Almost tautologically, the equivalence class $[P]_0$ of a projection $P \in \KM \otimes C^\ast_r(F)$ defines the complete topological invariant associated to the projection relative to stable homotopies. Any projection from $\KM \otimes C^\ast_r(F)$ derives from the spectral projection of a dynamical matrix for a material from the class introduced in section~\ref{Sec:SymmetricPatterns} and, reciprocally, any projection from $\KM \otimes C^\ast_r(F)$ can serve as the band projection of a properly designed material from the same class. Thus, the complete topological invariants associated to the band projections of dynamical matrices are account for by the $K_0$-theory of $C^\ast_r(F)$. The latter accepts a natural abelian semigroup structure and its enveloping group often admits a small number of generators. As already advertised in our introductory remarks, by identifying these generators, one gains complete knowledge and control of the dynamical features supported by our class of architected materials that are robust against the allowed deformations of the materials.

The goal of this section is to advertise the Baum-Connes machinery \cite{BaumIHES1982,BaumProceedings1988,Baum1987,BaumContMath1994}, which is one of the most efficient tools available for mapping the $K$-theories of group $C^\ast$-algebras. This will help us to map the isomorphism classes of the $K_0$-groups of the frieze group algebras. Additionally, we will collect results scattered in the literature in order to supply concrete presentations of the generators $K_0$-groups.  

\begin{remark}{\rm In this work we will not deal at all with $K_1$-theories, which become relevant if the so called chiral symmetry is imposed on the dynamical matrices \cite{ProdanSpringer2016}. This, however, will require a completely different set of experimental conditions.
				
On a similar note, it would generally appear natural to classify materials which are symmetric under a group action using (twisted) equivariant K-theory \cite{FreedAHP2013}, where representatives come from Hamiltonians that are symmetric under a (projective) representation of the symmetry group and the equivalence relation is modified to equivariant homotopy. In this work, non-equivariant K-theory is the correct choice since the invariance under the frieze group is merely an automatic consequence of the protocol that generates the architected material. However, the finer classification of equivariant K-theory would become relevant if one could control the experimental setup such as to impose symmetry also under some representation of the frieze group which acts non-trivially on the internal modes of the resonators.
}$\Diamond$
\end{remark}

\subsection{$K$-theories by Baum-Connes machinery}
\label{Sec:BC} 

The Baum-Connes assembly map $\mu$ makes the connection 
\begin{equation}\label{Eq:BC}
K_0^F(\underline{E}F) \stackrel{\mu}{\to} K_0\big(C_r^\ast(F)\big)
\end{equation}
between the equivariant $K$-homology of a universal space $\underline{E}F$ for proper actions of the frieze group $F$ \cite{BaumContMath1994} and the $K_0$-group of the group $C^\ast$-algebra of $F$. The map $\mu$ in Eq.~\eqref{Eq:BC} is a group isomorphism because the frieze groups are amenable and every countable amenable discrete group is a-T-menable, in which case the Baum-Connes conjecture is known to hold \cite{HigsonKasparov1997}. 

Our first task is to decide on a model for $\underline{E}F$. The full group ${\rm Iso}(\Ss)$ of isometries of the strip is the Lie group  given by the semidirect product of the Lie group $\RM$ of translations and the finite group $\ZM_2 \times \ZM_2$ generated by the two reflections against the two axes of the strip. It has a finite number of connected components. Then, according to \cite{BaumContMath1994}[p.~8], the space of the left cosets 
\begin{equation}
{\rm Iso}(\Ss)/\ZM_2 \times \ZM_2 : = \{s \cdot \ZM_2 \times \ZM_2, \ s \in {\rm Iso}(\Ss)\}  \simeq \RM
\end{equation} 
can serve as a universal space for proper actions of ${\rm Iso}(\Ss)$ and also for any of its discrete subgroups. Therefore, $\underline{E}F$ can be chosen as $\RM$, with the action of $F$ induced from its left action on ${\rm Iso}(\Ss)/\ZM_2 \times \ZM_2$. Specifically, $\tau$ and $\sigma_v$ act as expected while $\sigma_h$ acts as the identity map.

\vspace{0.1cm}

Our next task is to compute the abelian groups $K_0^F(\underline{E}F)$ and the main tool here is the equivariant Chern character \cite{Baum1987,Lueck2002}, which supplies the isomorphism described in the following statement:

\begin{theorem}[Th.~6.1, \cite{Mislin03}] Let $X$ be a proper $F$-CW complex. Then
\begin{equation}\label{Eq:Chern}
  K_0^F(X) \otimes \QM \cong \bigoplus_{r\in \NM} \bigoplus_{\gamma \in S F} H_{2r}(X^{\gamma}/Z(\gamma); \QM),
\end{equation}
where $S F$ is a set of representative members: one for each conjugacy class for each finite order element in $F$. 
$X^\gamma$ is the space of fix points of $X$ under the action of $\gamma$ and $Z(\gamma)$ is the centralizer of $\gamma$. The groups appearing on the right are the ordinary homology groups of spaces with constant coefficients in $\QM$.
\end{theorem}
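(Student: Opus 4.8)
The plan is to recognize both sides of \eqref{Eq:Chern} as \emph{equivariant homology theories} in the variable $X$, to construct a natural transformation between them — the equivariant Chern character — and to reduce the assertion to a computation on the orbits $F/H$, where it turns into classical character theory.

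First I would record the structure of the two sides. The source $X \mapsto K_*^F(X)$ is a $\ZM$-graded equivariant homology theory on proper $F$-CW pairs: it is homotopy invariant, excisive, and carries the long exact and Mayer--Vietoris sequences; on a homogeneous space it returns $K_*^F(F/H) = K_*^H(\mathrm{pt})$, which for finite $H$ equals the complex representation ring $R_\CM(H)$ in even degrees and $0$ in odd degrees. For the target, fix $\gamma\in SF$ and note that $X^\gamma = X^{\langle\gamma\rangle}$ is preserved by the centralizer $Z(\gamma)$, which contains $\langle\gamma\rangle$ acting trivially on $X^\gamma$. The assignment $X \mapsto H_*^{Z(\gamma)}(X^\gamma;\QM)$ (Borel equivariant homology) is again an equivariant homology theory, because $X \mapsto X^\gamma$ is exact on $F$-CW pairs; moreover, since the $F$-action on $X$ is proper, every point of $X^\gamma$ has finite $Z(\gamma)$-isotropy, so $X^\gamma$ is a proper $Z(\gamma)$-CW complex and $H_*^{Z(\gamma)}(X^\gamma;\QM) \cong H_*(X^\gamma/Z(\gamma);\QM)$ (the Borel construction is a rational homology equivalence over the quotient because $H_{>0}(BG';\QM)=0$ for finite $G'$). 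Summing over $r$ and over $SF$ then yields a rational equivariant homology theory, matching the right-hand side of \eqref{Eq:Chern}.

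With both sides set up as homology theories, I would appeal to the comparison principle: a natural transformation of equivariant homology theories on proper $F$-CW complexes that is an isomorphism on every orbit $F/H$ with $H$ a finite subgroup (these exhaust the isotropy groups occurring in proper actions) is an isomorphism on all proper $F$-CW complexes, by the usual induction over skeleta together with the five lemma. It then remains to (i) produce the equivariant Chern character as such a natural transformation — here I would use L\"uck's construction, which promotes the classical Chern isomorphism $R_\CM(H)\otimes\QM \xrightarrow{\sim} \QM^{\mathrm{Conj}(H)}$ (class functions) to a morphism of coefficient systems over the orbit category and then left Kan extends it — and (ii) check that this transformation is a rational isomorphism on each $F/H$. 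For (ii), the source at $F/H$ is $R_\CM(H)\otimes\QM$, while a short orbit count shows that $\bigsqcup_{\gamma\in SF}(F/H)^\gamma/Z(\gamma)$ is, rationally, a finite set of points in natural bijection with the conjugacy classes of $H$: $(F/H)^\gamma$ is nonempty precisely when $\gamma$ is $F$-conjugate into $H$, and counting fixed cosets modulo centralizers recovers $\mathrm{Conj}(H)$. Thus (ii) collapses to the statement that the irreducible characters of the finite group $H$ form a $\QM$-basis of its class functions.

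The genuinely delicate point is not the orbit computation but item (i): organizing the pointwise Chern characters into a natural transformation, i.e.\ making the decomposition $R_\CM(-)\otimes\QM \cong \bigoplus_{(\gamma)}\underline{\QM}_{(\gamma)}$ compatible with \emph{all} restriction, conjugation and induction maps (so that it is a splitting of Mackey-functor-type coefficient systems, not merely a pointwise splitting), and then showing that the resulting Atiyah--Hirzebruch-type spectral sequence computing Bredon homology with these coefficients degenerates. Degeneration is exactly where $\QM$-coefficients are essential: $\QM[G']$ is semisimple for finite $G'$ by Maschke's theorem, so all higher Bredon $\mathrm{Ext}$ over the orbit category vanish rationally, killing the higher differentials and the extension problems and leaving precisely the direct sum on the right of \eqref{Eq:Chern}. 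Assembling these ingredients — for which I would follow L\"uck's monograph and Mislin's notes — completes the argument.
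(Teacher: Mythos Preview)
The paper does not supply its own proof of this theorem: it is quoted verbatim as Theorem~6.1 of Mislin's lecture notes \cite{Mislin03} (with the equivariant Chern character of Baum--Connes \cite{Baum1987} and L\"uck \cite{Lueck2002} cited alongside), and the paper only \emph{applies} it in the computations of Section~\ref{Sec:BC}. There is therefore nothing in the paper to compare your argument against.

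That said, your outline is a faithful sketch of the proof one finds in the cited sources. The reduction of both sides to equivariant homology theories, the comparison principle over the orbit category reducing to $F/H$ with $H$ finite, the identification $K_*^F(F/H)\otimes\QM \cong R_\CM(H)\otimes\QM$, and the orbit count matching $\bigsqcup_{\gamma\in SF}(F/H)^\gamma/Z(\gamma)$ with $\mathrm{Conj}(H)$ are exactly the steps in L\"uck's and Mislin's treatments. You also correctly flag the genuinely nontrivial ingredient: assembling the pointwise character isomorphisms into a natural transformation of Mackey-type coefficient systems and invoking rational semisimplicity to collapse the Bredon/Atiyah--Hirzebruch spectral sequence. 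One small sharpening: in the passage from Borel homology $H_*^{Z(\gamma)}(X^\gamma;\QM)$ to $H_*(X^\gamma/Z(\gamma);\QM)$ you want the Leray--Serre (or Cartan--Leray) spectral sequence for $EZ(\gamma)\times_{Z(\gamma)} X^\gamma \to X^\gamma/Z(\gamma)$ together with the vanishing of $H_{>0}(B G';\QM)$ for the finite stabilizers $G'$; this is what you wrote, but it is worth saying explicitly that properness of the original $F$-action is what forces these stabilizers to be finite.
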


\begin{remark}{\rm As already stressed in our introduction, the tensoring with $\QM$ washes out the torsion part of $K_0^F(X)$. However, for our concrete context, we know from other sources that the $K_0$-groups, hence also $K_0^F(\underline E F)$, are torsion free. In this case, Eq.~\eqref{Eq:Chern} holds without tensoring by $\QM$ and the K-homology groups can be taken with coefficients in $\ZM$.
}$\Diamond$
\end{remark}

Based on the above observation and the fact that $\mu$ is an isomorphism in our specific context, we conclude
\begin{equation}\label{Eq:FinalChern}
  K_0\big (C^\ast_r(F)\big ) \cong \bigoplus_{r\in \NM} \bigoplus_{\gamma \in S F} H_{2r}(\underline{E}F^{\gamma}/Z(\gamma)).
\end{equation}
As we shall see, the resulting fix point spaces $\underline{E}F^\gamma$ ($\underline{E}F=\RM$ as topological space) in our applications of the Baum-Connes machinery are homeomorphic either to $\SM^1$ or to spaces contractible to a point, hence it is useful to recall that
\begin{equation}\label{Eq:HH}
  H_k(\SM^1) =
  \left\lbrace 
  \begin{array}{ll}
  \ZM, & k =0; \\
  \ZM, & k = 1; \\
  0,   & \text{otherwise}.
  \end{array}
  \right. \qquad
 H_k (\bullet ) =
 \left\lbrace 
 \begin{array}{ll}
 \ZM, & k =0; \\
 0,   & \text{otherwise}.
 \end{array}
 \right.
\end{equation}

The computations of the right-hand side of Eq.~\eqref{Eq:Chern} are detailed below and the results are summarized in Table~\ref{TK0}. These computations are complemented with explicit lists of generators of the $K_0$-groups.

\begin{table}
  \caption{\small Summary of the $K$-theoretic calculations.}
  \centering
  \begin{tabular}{cccc}
    \toprule
  Iso. class & Abstract generators & $S F$ & $K_0(C^\ast_r F) $
  \\ \midrule
  $\mathrm{Z}_\infty$ & $u$  & $\id$ & $\mathbb{Z}$ \\ 
  $\mathrm{Dih}_\infty$ & $u$, $v_1$ & $\id, v_1, uv_1$ & $\mathbb{Z}^3$ \\ 
  $\mathrm{Z}_\infty \times \mathrm{Dih}_1$ & $u$, $v_2$ &  $\id, v_2$ & $\mathbb{Z}^2$ \\ 
  $\mathrm{Dih}_\infty \times \mathrm{Dih}_1$ & $u$, $v_1$, $v_2$ & 
  $\id, v_1, v_2, v_1v_2, uv_1, u v_1 v_2$ & $\mathbb{Z}^6$\\ 
  \end{tabular}
  \label{TK0}
\end{table}

\subsubsection{$\mathrm{Z}_\infty$ isomorphism class}

The groups $p1$ and $p11g$ are both isomorphic to the infinite cyclic group $\mathrm{Z}_\infty = \langle u \rangle \cong \mathbb{Z}$. There are no non-trivial finite order elements since $u^m = 1$ implies $m=0$. For both $p1$ and $p11g$, we have $\underline{E}F^{1}/Z(1)= \mathbb{R}/ \mathbb{Z} \cong \SM^1$ and Eq.~\eqref{Eq:FinalChern} together with \eqref{Eq:HH} give
\begin{equation}
  K_0\big(C^\ast_r (p1) \big ) \cong K_0\big(C^\ast_r (p11g) \big ) \cong \ZM.
\end{equation}
The generators of the $K_0$-groups can be represented by the identity element of the group $C^\ast$-algebras.

\subsubsection{$\mathrm{Dih}_\infty$ isomorphism class}

The groups $p1m1$, $p2$ and $p2mg$ are all isomorphic to the infinite dihedral group $\mathrm{Dih}_\infty$.
Their $K_0$-groups will thus be  isomorphic as well.
We recall the presentation
\begin{align}
  \mathrm{Dih}_\infty = \braket{ u, v_1 }{v_1^2 , (u v_1)^2},
\end{align}
and the fact that every group element can be written as $\gamma_{n,m} = u^n v_1^m$, where $n \in \ZM$ and $m \in \lbrace 0,1\rbrace$.
Using the composition rule from Eq.~\ref{Eq:Compo1}, we obtain
\begin{align}
  \gamma_{n,m}^2 = u^n v_1^m u^n v_1^m
  = u^{n + (-1)^m n} v_1^{ 2 m~ {\rm mod}2}
  =
  \left\lbrace
  \begin{array}{ll}
    u^{2n}, &  m=0 \\
    \id, & m=1
  \end{array}
  \right.
\end{align}
Therefore, $\gamma_{n,m}$ is of finite order whenever $m=0$ and $n=0$ or when $m=1$ for arbitrary $n$.
In those cases, we have $\gamma_{0,0} = \id $ and $\gamma_n : =  \gamma_{n,1} = u^n v_1$. Furthermore,
\begin{align}
  \gamma_{p,q}^{-1} \gamma_n \gamma_{p,q}
  & = v_1^{-q} u^{-p}  u^n v_1 u^p v_1^q
  \notag \\
  & = v_1^{q}   u^{n-2p} v_1^{q+1}
  \notag \\
  & =    u^{(-1)^q (n-2p)} v_1 .
\end{align}
In particular, this means that we have the equivalence by conjugacy
\begin{equation}
  \gamma_n \sim \gamma_{n+2m} .
\end{equation}
Therefore, every finite order element is either conjugate to $1$, to $\gamma_0$ or to $\gamma_1$, and so $S(\mathrm{Dih}_\infty)= \lbrace \id, \gamma_0=v_1, \gamma_1 = u v_1 \rbrace$.
For $q=0$, $\gamma_{p,q}$ is in the centralizer of $\gamma_n$ only if $p=0$, while for $q=1$,  $\gamma_{p,q}$ is in the centralizer of $\gamma_n$ only if $p=n$.
Therefore,
\begin{align}
  Z(\gamma_n) = \lbrace \id, \gamma_n \rbrace \cong \ZM_2 .
\end{align}
The fixed point spaces are $\underline{E}F^{\id} = \RM$, $\underline{E}F^{\gamma_0} = \{ 0 \}$, $\underline{E}F^{\gamma_1}= \{ 1/2 \}$, as for the quotient spaces we have $\underline{E}F^{ \id}  / Z(\id) = [0,1/2]\cong \lbrace\bullet\rbrace$ and obviously this is also true for the rest of them, $ \underline{E}F^{\gamma_0}  / Z(\gamma_0) \cong    \underline{E}F^{\gamma_1}  / Z(\gamma_1) \cong \lbrace\bullet\rbrace$. 
Lastly, Eqs.~\eqref{Eq:FinalChern} and \eqref{Eq:HH} give 
\begin{equation}
  K_0\big(C^\ast_r( \mathrm{Dih}_\infty)\big) 
  \cong  \ZM^3.
\end{equation}

The computation of the above $K_0$-group and its generators appeared in \cite{Cuntz82} as an application  of the methods developed for free products, given that $C^\ast_r(\mathrm{Dih}_\infty))=C_r^\ast(\ZM_2) \star C_r^\ast (\ZM_2)$. These generators have been also identified in \cite{HadfieldJOP2004} by more direct methods. They are:\begin{equation}
  p_1 = \id, \ p_2 = \tfrac{1}{2}(\id-v_1), \ p_3 = \tfrac{1}{2}(\id- u v_1).
 \end{equation}

\subsubsection{$\mathrm{Z}_\infty \times \mathrm{Dih}_1$ isomorphism class}

Only the group $p11m$ belongs to this isomorphism class. We recall the presentation
\begin{align}
  \mathrm{Z}_\infty \times \mathrm{Dih}_1 = \braket{ u, v_2 }{v_2^2 ,  u v_2 u^{-1} v_2 }.
\end{align}
Every group element can be written as $\gamma_{n,m} = u^n v_2^m$, where $n \in \ZM$ and $m \in \lbrace 0,1\rbrace$.
Via Eq.~\ref{Eq:Compo1}, we now obtain
\begin{align}
  \gamma_{n,m}^2 = u^n v_2^m u^n v_2^m
  =
 u^{2n} v_2^{2m}
 = u^{2n} .
\end{align}
Therefore, $\gamma_{n,m}$ is of finite order only if $n=0$, and we have
\begin{equation}
  S  (\mathrm{Z}_\infty \times \mathrm{Dih}_1)
  = \lbrace 
  \id, v_2 \rbrace . 
\end{equation}
Furthermore, we find
\begin{align}
  \gamma^{-1}_{p,q}  v_2 \gamma_{p,q} 
  &=  v_2^{-q} u^{-p}  v_2 u^p v_2^q
  =  v_2^{-q}   v_2  v_2^q
  = v_2 ,
\end{align}
which implies that $Z(v_2) = \mathrm{Z}_\infty \times \mathrm{Dih}_1$.
Since $\underline{E}F^{\id} =\underline{E}F^{v_2} = \RM$, we obtain 
\begin{equation}
  \underline{E}F^\id / Z(\id)  =\underline{E}F^{v_2} / Z(v_2) =  \SM^1.
\end{equation}
Again, employing Eqs.~\eqref{Eq:FinalChern} and \eqref{Eq:HH}, we find
\begin{equation}
  K_0\big(C^\ast_r (\mathrm{Z}_\infty \times \mathrm{Dih}_1)\big)  \cong  \ZM^2.
\end{equation}
The generators of this $K_0$-group can be represented by
\begin{equation}
p_1 = \id, \ p_2 = \tfrac{1}{2}(\id-v_2).
\end{equation}

\subsubsection{$\mathrm{Dih}_\infty \times \mathrm{Dih}_1$ isomorphism class}

Only the group $p2mm$ belongs to this isomorphism class, and we have the presentation
\begin{equation}
  \mathrm{Dih}_\infty \times \mathrm{Dih}_1 =  
  \langle u,v_1,v_2 ~|~ v_j^2, (v_1 v_2)^2, (u v_1)^2, u v_2 u^{-1} v_2\rangle
\end{equation}
Every group element can be written in the form $\gamma_{k,l,m} = u^k v_1^l v_2^m$, where $k \in \ZM$ and $l, m\in \lbrace 0, 1\rbrace $.
In this case, we find
\begin{align}
  \gamma_{k,l,m}^2 
  =
  \left\lbrace
  \begin{array}{ll}
    u^{2k}, &  l=0 \\
    \id, & l=1.
  \end{array}
  \right.
\end{align}
Thus, $\gamma_{k,l,m}^2 = \id$ if either (i) $l=1$ for arbitrary $k$ and $m$ or (ii) $l=0$ and $k=0$ for arbitrary $m$. First, we take a look at case (i) and determine the conjugacy classes and centralizers. Similar to before, we have
\begin{align}
  \gamma_{p,q,r}^{-1}
  \gamma_{k,1,m} \gamma_{p,q,r} 
  & = \gamma_{(-1)^q(k-2p),1,m}.
\end{align}
Therefore, we again obtain a conjugacy equivalence of the form
\begin{equation}
  \gamma_{k,1,m} \sim  \gamma_{\pm(k-2p),1,m} 
\end{equation}
We select $k=0$ and $k=1$ as representative elements and obtain the centralizers
\begin{equation}
Z(\gamma_{0,1,m}) = \lbrace
  1, v_1, v_2, v_1 v_2 \rbrace, \quad
  Z(\gamma_{1,1,m}) = \lbrace
  1, u v_1, v_2, u v_1 v_2 \rbrace.
\end{equation}
Both are isomorphic to the Klein 4-group.
The fix-point sets are
\begin{equation}
\underline{E}F^{\gamma_{0,1,0}} = \underline{E}F^{\gamma_{0,1,1}} = \{0\}, \quad \underline{E}F^{\gamma_{1,1,0}} = \underline{E}F^{\gamma_{1,1,1}} = \{1/2\}.
\end{equation}
In each case, the quotient space $\underline{E}F^\gamma / Z(\gamma)$ is contractible to a point.

\vspace{0.1cm}

Next, we turn to case (ii) and find
\begin{equation}
  \gamma_{p,q,r}^{-1}
  \gamma_{0,0,m}
  \gamma_{p,q,r}
  =
  \gamma_{0,0,m} ,
\end{equation}
and so $Z(\gamma_{0,0,m}) = \mathrm{Dih}_\infty \times \mathrm{Dih}_1 $ .
The fix-point set are
\begin{align}
  \underline{E}F^{\gamma_{0,0,0}} = \underline{E}F^{\gamma_{0,0,1}} = \RM,
\end{align}
such that
\begin{equation}
  \underline{E}F^{\gamma_{0,0,0}} / Z(\gamma_{0,0,0} ) = \underline{E}F^{\gamma_{0,0,1}} / Z(\gamma_{0,0,1} ) = [0, 1].
\end{equation}
Both quotient spaces are contractible to a point.

\vspace{0.1cm}

In summary, we have found
\begin{equation}
  S ( \mathrm{Dih}_\infty \times \mathrm{Dih}_1) = \lbrace  1, v_1, v_2, v_1 v_2, u v_1, u v_1 v_2 \rbrace 
\end{equation}
and all quotient spaces $\underline{E}F^{\gamma} / Z(\gamma )$ are contractible to a point for $ \gamma \in S ( \mathrm{Dih}_\infty \times \mathrm{Dih}_1) $. Then Eqs.~\eqref{Eq:FinalChern} and \eqref{Eq:HH} give
\begin{equation}
  K_0\big (C^\ast_r (\mathrm{Dih}_\infty \times \mathrm{Dih}_1)\big ) \cong  \ZM^6.
\end{equation}

The six generators of the $K_0$-group can be chosen as
\begin{equation}\label{Eq:K0Generators}
  \begin{aligned}
  & p_1 = \id, \ p_2 = \tfrac{1}{2}(\id-v_1), \ p_3 = \tfrac{1}{2}(\id- u v_1), \\
  & p_4 = \tfrac{1}{2}(\id+v_2), \ p_5 = p_2(\id- p_4), \ p_6 = p_3 (\id-p_4).
  \end{aligned}
 \end{equation}
This can be seen by applying K\"unneth theorem  \cite{BlackadarBook}[Th.~23.1.2] to $C^\ast_r (\mathrm{Dih}_\infty \times \mathrm{Dih}_1) \simeq C^\ast_r (\mathrm{Dih}_\infty) \otimes C^\ast_r (\langle \id,v_2\rangle)$.

\section{Algorithm for computing the K-theoretic labels}
\label{Sec:NumInv}

Given a band projection $\pi(p)$, its class in $K_0$-theory can be expressed in one of the bases $\{p_a\}$ listed in the previous section, $[p]_0 = \sum_a n_a [p_a]_0$, $n_a \in \ZM$. From general considerations, we can be sure that two band projections fall into the same topological class if their K-theoretic labels $\{n_a\}$ all coincide. Our goal for this section is to supply practical algorithms for computing the K-theoretic labels of a band projection, generated by an arbitrary dynamical matrix. For this, we employ Kasparov's bivariant K-theory and in particular Kasparov's product \cite{KasparovJSM1981,KasparovJSM1987} for guidance and as a convenient book-keeping instrument, as well as to illustrate a model calculation that can be repeated for other group $C^\ast$-algebras. We use standard notation, {\it e.g.} as in \cite{BlackadarBook}, and a brief review of the concepts is supplied below.

\subsection{General considerations}\label{Sec:GeneralStr} For ungraded $C^\ast$-algebras $A$ and $B$, the elements of Kasparov's abelian group $KK_0(A,B)$ can be presented as the homotopy class of a pair of morphisms $\bar \phi =(\phi_1,\phi_2)$ from $A$ to ${\rm End}^\ast\big (\ell^2(\NM,B) \big )$ with the property that $\phi_1 -\phi_2$ takes values in $\KM \otimes B$ \cite{CuntzKTheory1987}. In particular, $KK_0(\CM,A)$ coincides with the K-group $K_0(A)$ and any morphism $\psi : A \to B$ supplies an element of $KK_0(A,B)$. Furthermore, if $A$, $B$, $C$ are $C^\ast$-algebras, there is an associative product
\begin{equation}
KK_0(A,B) \times KK_0(B,C) \to KK_0(A,C)
\end{equation}
with quite exceptional properties \cite{Cuntz2005}. In particular, if $\psi : A \to B$ and $\psi' : B \to C$ are $C^\ast$-algebra morphisms, then
\begin{equation}
[\psi ] \times [\psi'] = [\psi' \circ \psi].
\end{equation}
This particular case will actually cover all applications considered here. 

\vspace{0.1cm}

Now, let $F$ be any of the frieze groups. In the KK-language, our task is to construct {\it all} pairings of the type
\begin{equation}\label{Eq:Pairing1}
KK_0\big (\CM,C^\ast_r(F)\big ) \times KK_0\big (C^\ast_r(F), \CM\big ) \to KK_0(\CM,\CM) \simeq \ZM,
\end{equation}
such that
\begin{equation}
[p_a]_0 \times [\bar \phi_b] = \Lambda \, \delta_{a,b}, \quad \Lambda \in \NM^\times,
\end{equation}
where $\{[p_a]\}$ is one of the K-theoretic bases listed in the previous section. If $[p]_0 = \sum_a n_a [p_a]_0$, $n_a \in \ZM$, is the decomposition of the $K_0$-class of a projection in such basis, then
\begin{equation}
\Lambda^{-1} \, [p]_0 \times [\bar \phi_a] = n_a
\end{equation}
and this will supply an algorithm to map the K-theoretic labels $\{n_a\}$, hence, all the topological invariants associated to a projection $p \in \KM \otimes C_r^\ast(F)$.

Our strategy is to construct the pairings~\eqref{Eq:Pairing1} in two steps,
\begin{equation}\label{Eq:Pairing2}
\Big (KK\big (\CM,C^\ast_r(F)\big ) \times KK\big (C^\ast_r(F), C^\ast_r(\tilde F)\big ) \Big )\times KK\big (C^\ast_r(\tilde F), \CM\big ) \to \ZM,
\end{equation}
where $\tilde F$ is a finite group.  The reasoning behind the stated strategy is that the pairings in the big parentheses land in $KK\big (\CM,C^\ast_r(\tilde F)\big )$ and the pairings  $KK\big (\CM,C^\ast_r(\tilde F)\big )\times KK\big (C^\ast_r(\tilde F), \CM\big )$ are essentially known and computable once the representation theory of the finite group $\tilde F$ is resolved. Hence, the task reduces to finding appropriate Kasparov cycles from $KK\big (C^\ast_r(F), C^\ast_r(\tilde F)\big )$. As we shall see, this cycle will take the form of a simple morphism. 

\subsection{Explicit pairings} As we have seen in section~\ref{Sec:SymmetricPatterns}, all frieze groups can be derived from generators and relations. In particular, they all have $\ZM$, generated by the element $u$ in our notation, as a normal subgroup.

\begin{proposition} Let $F$ be any of the frieze groups and let $k\ZM$ be the subgroup of $F$ generated by $u^k$ for some fixed $k \geq 2$.  Then $k \ZM$ is a normal subgroup of $F$.
\end{proposition}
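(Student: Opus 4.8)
The plan is to reduce the claim to two elementary facts: that $\langle u\rangle\cong\ZM$ is a normal subgroup of each frieze group $F$, and that $k\ZM=\langle u^k\rangle$ is the unique (hence characteristic) subgroup of $\langle u\rangle$ of index $k$. First I would recall the uniform description from Section~\ref{Sec:frieze}: every element of $F$ can be written as $u^n v^\alpha$ with $\alpha=(\alpha_1,\alpha_2)\in\{0,1\}^2$, and the composition and inversion rules \eqref{Eq:Compo1}--\eqref{Eq:Inversion} are at our disposal. Since the normalizer $N_F(k\ZM)$ is a subgroup of $F$, it suffices to show that conjugation by an arbitrary $g=u^n v^\alpha$ carries $k\ZM$ into itself; applying the same to $g^{-1}$ then upgrades this to equality.

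The computational heart of the argument is the identity
\[
  g\,u^\ell\,g^{-1} \;=\; \big(u^n v^\alpha\big)\,u^\ell\,\big(u^n v^\alpha\big)^{-1} \;=\; u^{(-1)^{\alpha_1}\ell},\qquad \ell\in\ZM,
\]
valid for every $g=u^n v^\alpha\in F$. One obtains it directly from \eqref{Eq:Compo1}: the two $v^\alpha$-factors combine to $v^{(\alpha+\alpha)\bmod 2}=v^0$, while the three $u$-exponents telescope, the $n$ coming from $g$ cancelling the $-n$ that $g^{-1}$ contributes after the sign twist, leaving $(-1)^{\alpha_1}\ell$. Specializing to $\ell=km$ gives $g\,(u^k)^m\,g^{-1}=(u^k)^{\pm m}\in k\ZM$, so conjugation by any element of $F$ maps $k\ZM$ bijectively onto itself; hence $k\ZM\trianglelefteq F$. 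Conceptually this is nothing more than the observation that conjugation restricts to an automorphism of the normal subgroup $\langle u\rangle\cong\ZM$, that $\mathrm{Aut}(\ZM)=\{\pm\mathrm{id}\}$, and that $k\ZM$ is preserved by $\pm\mathrm{id}$.

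I expect no real obstacle here; the only point requiring care is the precise form of the inversion rule, which should read $(u^n v^\alpha)^{-1}=u^{-(-1)^{\alpha_1}n}v^\alpha$, and I would verify this against \eqref{Eq:Compo1} before invoking it, since the cancellation of the $n$-terms in the telescoping step depends exactly on that sign. The computation is uniform across all seven frieze groups -- the semidirect-product twist $(-1)^{\alpha_1}$ encodes everything that matters -- so no case analysis by isomorphism class is needed, and the hypothesis $k\ge 2$ plays no role in establishing normality (it becomes relevant only later, when the finite quotient $F/k\ZM$ is identified).
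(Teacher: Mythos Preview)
Your proposal is correct and follows essentially the same route as the paper: both arguments use the uniform presentation $u^n v^\alpha$ and the composition rule~\eqref{Eq:Compo1} to reduce the normality check to the single identity $g\,u^\ell\,g^{-1}=u^{(-1)^{\alpha_1}\ell}$ (the paper conjugates by $g^{-1}$ instead, which is immaterial). Your additional conceptual remark that $k\ZM$ is characteristic in $\langle u\rangle\trianglelefteq F$, and your observation that the hypothesis $k\ge 2$ is irrelevant for normality, are accurate embellishments; your corrected inversion formula $(u^n v^\alpha)^{-1}=u^{-(-1)^{\alpha_1}n}v^\alpha$ is indeed what~\eqref{Eq:Inversion} should read.
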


\begin{proof} We can treat all frieze groups at once by using the conventions stated in subsection~\ref{Sec:frieze}. Indeed, given the multiplication rule~\eqref{Eq:Compo1}, we have 
\begin{equation}
(u^n v^\alpha)^{-1} u^k (u^n v^\alpha) = v^\alpha u^k v^\alpha = u^{(-1)^{\alpha_1} k} \in k \ZM,
\end{equation}
which shows that, indeed, $k \ZM$ is indeed a normal subgroup.
\end{proof}

\begin{corollary} There exist the group morphisms $F \to F/k \ZM$ from the infinite frieze groups to finite groups. The latter correspond to adding the relation $u^k =1$ to the defining relations of the frieze groups. 
\end{corollary}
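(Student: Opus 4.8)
The plan is to read off both assertions directly from the Proposition, using only elementary group theory. First, since the Proposition establishes that $k\ZM$ is a normal subgroup of $F$, the quotient $F/k\ZM$ is a well-defined group and the canonical surjection $q_k : F \to F/k\ZM$ is a group morphism; this already produces the morphisms claimed in the first sentence.

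Next I would prove finiteness of these quotients. Every frieze group sits in a short exact sequence $1 \to \ZM \to F \to P \to 1$, where $\ZM = \langle u\rangle$ is the translation subgroup and $P$ is the finite point group, with $|P|\in\{1,2,4\}$ according to the isomorphism class in Table~\ref{Ta:1}. Passing to the quotient by $k\ZM$, which lies inside $\ZM$ and is normal in $F$, gives the short exact sequence $1 \to \ZM/k\ZM \to F/k\ZM \to P \to 1$. Since $\ZM/k\ZM \cong \ZM_k$ has order $k$, we obtain $|F/k\ZM| = k\,|P| \le 4k < \infty$, so each quotient is a finite group (concretely $\ZM_k$ for the $\mathrm{Z}_\infty$ class, $\mathrm{Dih}_k$ for the $\mathrm{Dih}_\infty$ class, and so on).

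For the presentation statement I would invoke the standard fact that if a group is presented as $\langle S \mid R\rangle$ and $N$ denotes the normal closure in it of a subset $T$, then the quotient by $N$ has presentation $\langle S \mid R \cup T\rangle$. Here take $T = \{u^k\}$. It then remains to check that the normal closure $\langle\langle u^k\rangle\rangle$ of $\{u^k\}$ in $F$ equals $k\ZM$: on one hand $k\ZM$ is normal by the Proposition and contains $u^k$, so $\langle\langle u^k\rangle\rangle \subseteq k\ZM$; on the other hand any normal subgroup containing $u^k$ contains the cyclic subgroup $\langle u^k\rangle = k\ZM$, so $k\ZM \subseteq \langle\langle u^k\rangle\rangle$. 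Hence $F/k\ZM$ is obtained from the presentation of $F$ recorded in subsection~\ref{Sec:frieze} simply by adjoining the single relation $u^k = 1$. The argument is routine; the only point needing a moment's care is this last identification of the normal closure of $u^k$ with $k\ZM$ — i.e. that imposing $u^k=1$ does not inadvertently collapse more of the group — but this is immediate once the Proposition's normality statement is available, so no genuine obstacle arises.
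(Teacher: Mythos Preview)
Your argument is correct and is precisely the routine unpacking the paper leaves implicit: the Corollary is stated without proof, as an immediate consequence of the normality of $k\ZM$ established in the preceding Proposition, and your three steps (canonical surjection, finiteness via the extension $1\to\ZM\to F\to P\to 1$, and the normal-closure identification $\langle\langle u^k\rangle\rangle=k\ZM$) are exactly the standard details one would supply. There is nothing to add.
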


\begin{remark}{\rm The above statements will also be relevant for our numerical applications. Indeed, the quotient $F/k\ZM$ supplies a finite group and a canonical morphism $F \to F/k \ZM$, which deliver convenient finite size approximations for the infinite lattice models (see subsection~\ref{Sec:NA}).
}$\Diamond$
\end{remark}

According to Proposition~\ref{Pro:LiftMor}, these morphisms lift to morphisms $\varphi_k : C^\ast_r(F) \to C^\ast_r(F/k\ZM)$ of $C^\ast$-algebras, hence they supply Kasparov cycles $$[\varphi_k] \in KK\big ( C^\ast_r(F), C^\ast_r(F/k\ZM) \big ).$$
For elements from $\ell^1(F)$, the morphism $\varphi_k$ can be written down explicitly as
\begin{equation}\label{Eq:PhiE}
\sum_{n\in \ZM} \sum_{\alpha} c(n,\alpha) u^n v^\alpha  \mapsto \sum_{m=0}^{k-1} \sum_{\alpha} \Big (\sum_{r \in k \ZM} c(m+r,\alpha) \Big ) \tilde u^m \tilde v^\alpha,
\end{equation}
where $\tilde u$ and $\tilde v_j$ are the classes of $u$ and $v_j$'s in $F/k\ZM$. Below is an alternative characterization of the morphism $\varphi_k$:

\begin{proposition}\label{Pro:PhiKPiK} Let $\Tt_k$ be the canonical trace on $\CM F/k \ZM$. Then $\Tt_k \circ \varphi_k$ supplies a positive trace on $\CM F$. Let $(\Hh_k,\pi_k)$ be the right GNS representation of $C^\ast_r(F)$ for this tracial state. Then the image $\pi_k(\CM F)$ in $\BM(\Hh_k)$ coincides with the right regular representation of $C^\ast_r(F/k \ZM)$. Hence, $\varphi_k$ can be identified with the map $\pi_k$.
\end{proposition}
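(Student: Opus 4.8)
The plan is to carry out the right GNS construction for the functional $\tau := \Tt_k \circ \varphi_k$ by hand and simply recognize the output, exploiting throughout that $F/k\ZM$ is a \emph{finite} group, so that $\CM[F/k\ZM] = C^\ast_r(F/k\ZM)$ is finite-dimensional and all closure/density subtleties disappear.

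First I would check that $\tau$ is a positive trace. Since $\varphi_k$ is a $\ast$-homomorphism by Proposition~\ref{Pro:LiftMor}, positivity is immediate from $\tau(q^\ast q) = \Tt_k\big(\varphi_k(q)^\ast \varphi_k(q)\big) \ge 0$, and the identity $\tau(qq')=\tau(q'q)$ follows from the trace property of $\Tt_k$ applied to $\varphi_k(q),\varphi_k(q')$. Finite-dimensionality of $\CM[F/k\ZM]$ makes $\Tt_k$, hence $\tau$, automatically continuous on $C^\ast_r(F)$, so the right GNS representation of the $C^\ast$-algebra $C^\ast_r(F)$ attached to this tracial state is well posed.

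Next I would identify the GNS data. The canonical trace on any group algebra is faithful, so $\Tt_k(x^\ast x)=0$ forces $x=0$; hence the GNS null space $N_\tau = \{q\in\CM F : \tau(q^\ast q)=0\}$ equals $\ker(\varphi_k|_{\CM F})$, concretely the kernel of the folding map displayed in~\eqref{Eq:PhiE}. Therefore $\varphi_k$ descends to a bijective linear map $\CM F/N_\tau \to \CM[F/k\ZM]$, and it is isometric for the GNS inner product because $\langle \bar q,\bar q\rangle_\tau = \Tt_k\big(\varphi_k(q)^\ast\varphi_k(q)\big)$ is exactly the squared $\ell^2(F/k\ZM)$-norm of $\varphi_k(q)$; completing, $\Hh_k \cong \ell^2(F/k\ZM)$. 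For the action I would use the right-handed convention of~\eqref{Eq:PiRightReg}: a group element $g$ acts on the GNS space by $\overline{g'}\mapsto\overline{g'g^{-1}}$, which descends to $\CM F/N_\tau$ and is unitary there \emph{precisely because $\tau$ is a trace} — here one uses $\tau\big((g_1 g^{-1})^\ast(g_2 g^{-1})\big)=\tau\big(g_1^\ast g_2\big)$. Transporting along the isometry above turns this into $|\tilde g'\rangle \mapsto |\tilde g'\tilde g^{-1}\rangle = \pi_R^{F/k\ZM}(\tilde g)\,|\tilde g'\rangle$, where $\tilde g$ is the class of $g$, so by linearity and continuity $\pi_k = \pi_R^{F/k\ZM}\circ\varphi_k$ on all of $C^\ast_r(F)$. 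Since the group morphism $F\to F/k\ZM$ is surjective (the classes of $u,v_1,v_2$ generate $F/k\ZM$), $\varphi_k$ maps $\CM F$ onto $\CM[F/k\ZM]$, hence $\pi_k(\CM F) = \pi_R^{F/k\ZM}\big(\CM[F/k\ZM]\big)$, which is exactly the right regular representation of $C^\ast_r(F/k\ZM)$; and because $\pi_R^{F/k\ZM}$ is a faithful $\ast$-representation of $C^\ast_r(F/k\ZM)$, identifying this algebra with its image identifies $\varphi_k$ with $\pi_k$.

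I do not expect a genuine obstacle: the content is the standard GNS machinery, and the only points requiring attention are bookkeeping ones — keeping the paper's \emph{right}-handed GNS and regular-representation conventions consistent, and noticing that the descent of the group action to $\CM F/N_\tau$ really uses the trace identity and not merely positivity of $\tau$. The finiteness of $F/k\ZM$ is what keeps everything clean, since $\pi_R^{F/k\ZM}\big(\CM[F/k\ZM]\big)$ is automatically a norm-closed, finite-dimensional $\ast$-subalgebra of $\BM\big(\ell^2(F/k\ZM)\big)$, so no completion of the image is needed.
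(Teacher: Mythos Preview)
Your proposal is correct and follows essentially the same route as the paper: identify the GNS Hilbert space of $\Tt_k\circ\varphi_k$ with $\ell^2(F/k\ZM)$ via surjectivity of $\varphi_k$ and faithfulness of $\Tt_k$, then observe that the induced action is exactly $\pi_R^{F/k\ZM}\circ\varphi_k$. The paper merely sketches this in two sentences, while you have unpacked the standard GNS bookkeeping (null space, isometry, unitarity from the trace property) in more detail.
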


\begin{proof} First, let us point out that the trace $\Tt_k$ generates the right regular representation of $C^\ast_r(F/k\ZM)$ and, since $\varphi_k$ is surjective,   $\Hh_k$ coincide with $\ell^2(F/k \ZM)$, the Hilbert space of the right regular representation. Then a direct computation shows that $\pi_k(f) = \pi \big (\varphi_k(f)\big )$ for any $f \in \CM F$, where $\pi$ is the right regular representation of $F/k\ZM$.
\end{proof}

For us, the relevant finite group is $\tilde F : = F/2\ZM$. Now, given a finite group $G$, any of its finite-dimensional representations supply a morphism $C^\ast_r(G) \to \KM$, hence a cycle from $KK\big (C^\ast_r(G),\CM\big )$. On the other hand, the $K_0$-theory of $C^\ast_r(G)$ can be read off from the ring of representations of the group $G$ \cite{LueckStamm}[Th.~3.2]. Then we have canonical pairings $KK\big (\CM,C^\ast_r(G)\big ) \times KK\big (C^\ast_r(G),\CM\big ) \to \ZM$, which in the present context work as follows:

\begin{proposition}
\label{prop:K0labeling}
Let $F$ be any of the frieze groups and $\tilde F$ be as above. If $\tilde p$ is a projection from $M_N(\CM) \otimes C^\ast_r(\tilde F)$, then the map
\begin{equation}\label{Eq:Rep0}
\tilde F \ni \tilde f \mapsto \left .U_{\tilde f}\right |_{\pi(\tilde p) [\CM^N \otimes \ell^2(\tilde F)]}
\end{equation}
supplies a representation of $\tilde F$ by linear maps on the invariant sub-space $\pi(\tilde p) [\CM^N \otimes \ell^2(\tilde F)]$. Let $\{\rho_a\}_{a = \overline {1,|\tilde F|}}$ be a complete set of irreducible representations of $\tilde F$. Then $\{\rho_a\}$ extend to morphisms from $\KM \otimes C^\ast_r(\tilde F)$ to $\KM$ that generate a basis for Kasparov's group $KK\big (C^\ast_r(\tilde F), \CM\big )$. Furthermore, the pairings
\begin{equation}\label{Eq:Pairing2}
\tilde n_a : = [\tilde p]_0 \times [\rho_a] = \frac{1}{|\tilde F|}\sum_{\tilde f \in \tilde F}  {\rm Tr}\big (U_{\tilde f} \pi(\tilde p)\big ) \, {\rm Tr}\big (\rho_a(\tilde f)\big ) \in \ZM,
\end{equation}
supply a full set of $K$-theoretic labels for $\tilde p$.
Above, ``$\times$'' refers to the Kasparov product
\begin{equation}
KK\big (\CM,C^\ast_r(\tilde F)\big ) \times KK\big (C^\ast_r(\tilde F), \CM\big ) \to KK(\CM,\CM) \simeq \ZM.
\end{equation}
\end{proposition} 

\begin{proof} Any projection $\tilde p$ can be seen as a $C^\ast$-algebra morphism from $\CM$ to $M_N(\CM) \otimes C^\ast_r(\tilde F)$. Indeed, the map
\begin{equation}
\CM \ni c \mapsto c \, \tilde p \in M_N(\CM) \otimes C^\ast_r(\tilde F)
\end{equation}
preserve multiplication precisely because of the idempotency of $\tilde p$,
\begin{equation}
c_1 c_2 \mapsto (c_1 \tilde p) (c_2 \tilde p) =c_1 c_2 \, \tilde p^2 = c_1 c_2 \, \tilde p.
\end{equation}
We denote this morphism by the same symbol $\tilde p$. We claim that Eq.~\eqref{Eq:Pairing2} is nothing else but the trace of $({\rm id} \otimes \rho_a) \circ \tilde p$. According to our observation at the beginning of the previous subsection, $({\rm id} \otimes \rho_a) \circ \tilde p$ encodes the Kasparov product $[\tilde p]_0 \times [\rho_a]$, which lands in $KK_0(\CM,\CM)$, and the trace applied on $({\rm id} \otimes \rho_a) \circ \tilde p$ is simply the isomorphism mapping $KK_0(\CM,\CM)$ to $\ZM$. Thus, the proof is complete if we can confirm the above claim. Now, suppose first that $\tilde p$ is from $C^\ast_r(\tilde F)$. Then, if $\tilde p = \sum_{\tilde f} \tilde p_{\tilde f} \, \tilde f$, the coefficients of this expansion are given by (see Remark~\ref{Re:TT})
\begin{equation}
\tilde p_{\tilde f} = \Tt (\tilde f^{-1} \, \tilde p) = \frac{1}{|\tilde F|}{\rm Tr} \big (U_{\tilde f} \, \pi(\tilde p) \big ).
\end{equation}
More generally, if $\tilde p$ is from $M_N(\CM) \otimes C^\ast_r(\tilde F)$, then the coefficients $\tilde p_{\tilde f}$ are from $M_N(\CM)$ and $\tilde p_{\tilde f} = ({\rm id} \otimes \Tt) (\tilde f^{-1}\tilde p)$. Consequently, 
\begin{equation}
{\rm Tr}(\tilde p_{\tilde f}) = \frac{1}{|\tilde F|}{\rm Tr} \big (U_{\tilde f} \, \pi(\tilde p) \big ).
\end{equation}
Then
\begin{equation}
{\rm Tr}\Big (({\rm id} \otimes \rho_a)(\tilde p) \Big ) = {\rm Tr}\Big (\sum_{\tilde f \in \tilde F} \tilde p_{\tilde f} \otimes \rho_a(\tilde f) \Big ) = \frac{1}{|\tilde F|}\sum_{\tilde f \in \tilde F} {\rm Tr}(U_{\tilde f} \pi(\tilde p)\big ) {\rm Tr}\big (\rho_a(\tilde f) \big ),
\end{equation}
and this completes the proof.
\end{proof}

\begin{remark}{\rm The representations $\rho_a$ generate a complete set of mutually orthogonal central projections \cite{SerreBook}[pg.~50]
\begin{equation}
	\tilde p_a = \tfrac{d_a}{|\tilde F|} \sum_{\tilde f \in \tilde F} {\rm Tr}\big (\rho_a(\tilde f)\big ) \, \tilde f^{-1}, \quad d_a ={\rm dim} \, \rho_a,
\end{equation} 
for which
\begin{equation}
	C^\ast_r(\tilde F) = \bigoplus_a \tilde{p}_a C^\ast_r(\tilde F) \tilde{p}_a \simeq \bigoplus_a \rho_a(C^\ast_r(\tilde F)) =  \bigoplus_a M_{d_a}(\CM)
\end{equation}
and a basis of $K_0(C^\ast_r(\tilde F))$ is obtained by choosing a rank-$1$-projection from each block. The K-theoretic labels $\tilde n_a$ in Eq.~\eqref{Eq:Pairing1} are the expansion coefficients of a projection with respect to such a basis.} $\Diamond$
\end{remark}

\begin{remark}{\rm The right side of Eq.~\eqref{Eq:Pairing2} returns the power $n_a$ in the decomposition $\oplus_a \; \rho_a^{n_a}$ of the representation~\eqref{Eq:Rep0} in the irreducible representations. The Kasparov groups and product are used here to place these algebraic relations in a topological context. Indeed, when phrased as above, the pairings communicate much more, specifically, that the $K$-theoretic labels are invariant against continuous stable deformations of $\tilde p$. The Kasparov groups and product also supplies a very convenient book keeping instrument. Lastly, they enable us to place this and the following computations in a framework that can supply guidance in many other situations.
}$\Diamond$
\end{remark}

Now, according to Proposition~\ref{Pro:PhiKPiK}, $\varphi_2$ lands exactly in the right regular representation of $\tilde F$. Hence $\varphi_2$ can be seen as a cycle from $KK\big (C^\ast_r(F), C^\ast_r(\tilde F)\big )$. The following statement, together with the explicit checks of the assumptions, carried out in the next subsection, assures us that $\varphi_2$ is the actual sought cycle, mentioned in our discussion of the general strategy:

\begin{theorem}\label{Th:Main} Let $F$ be any of the frieze groups and $\{p_a\}$ be the basis of the $K_0$-group of $C^\ast_r(F)$, as listed in the previous section. If there is a complete set of irreducible representations $\{\rho_b\}$ of $\tilde{F}=F/2\ZM$ such that the map,
\begin{equation}\label{Eq:PreTI}
K_0\big (C^\ast_r(F) \big ) \ni [p_a]_0 \mapsto \big \{[p_a]_0 \times [\varphi_2] \times [\rho_b] \big \} \in \ZM^{\times |\tilde F|},
\end{equation}
is injective, then we can find a set of representations $\{\chi_a\}$ and corresponding classes $\{[\chi_a]\} \subset KK\big (C^\ast_r(\tilde F), \CM\big )$ resolving the K-theoretic labels, in the sense that
\begin{equation}
[p_a]_0 \times [\varphi_2] \times [\chi_b] = \Lambda \delta_{a,b}, \quad \Lambda \in \NM^\times.
\end{equation}
Consequently, the full set of K-theoretic labels of a projection $p$ from $\KM \otimes C^\ast_r(F)$ can be computed as
\begin{equation}\label{Eq:TI}
n_a = \frac{1}{\Lambda |\tilde F|}\sum_{\tilde f \in \tilde F}  {\rm Tr}\big(U_{\tilde f} \varphi_2(p)\big) \, {\rm Tr}\big(\chi_a(\tilde f)\big ) \in \ZM.
\end{equation}
\end{theorem}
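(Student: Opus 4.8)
The plan is to turn the asserted orthogonality relation into a piece of elementary integral linear algebra, after rewriting the Kasparov products in \eqref{Eq:PreTI} with the help of Proposition~\ref{prop:K0labeling}. First I would unwind the bookkeeping: identifying $K_0\big(C^\ast_r(F)\big)\cong\ZM^r$ via the basis $\{[p_a]_0\}$ of the previous section, and using associativity of the Kasparov product together with the fact that the product with the class of a $\ast$-morphism is the induced map on $K$-theory, the components of the map \eqref{Eq:PreTI} are the integers
\[
A_{b,a}:=[p_a]_0\times[\varphi_2]\times[\rho_b]=\big((\varphi_2)_\ast[p_a]_0\big)\times[\rho_b]\ \in\ KK_0(\CM,\CM)\simeq\ZM ,
\]
so that \eqref{Eq:PreTI} is the homomorphism $\ZM^r\ni x\mapsto Ax\in\ZM^{\times|\tilde F|}$, where $r$ is the number of basis elements $p_a$, i.e.\ the rank of the torsion-free group $K_0\big(C^\ast_r(F)\big)$. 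The injectivity hypothesis then says precisely that the integer matrix $A$, of size $|\tilde F|\times r$, has full column rank $r$ over $\QM$.

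The second step inverts $A$ up to a scalar. Since $A\colon\QM^r\to\QM^{|\tilde F|}$ is injective it has a rational left inverse, for instance $N=(A^{\mathsf T}A)^{-1}A^{\mathsf T}$; letting $\Lambda\in\NM^\times$ be a common denominator of the entries of $N$ and setting $L:=\Lambda N$, which has integer entries, one obtains $LA=\Lambda\,\mathbf 1_r$. I would then define $[\chi_a]:=\sum_b L_{a,b}\,[\rho_b]$. Since the $[\rho_b]$ form a $\ZM$-basis of the abelian group $KK\big(C^\ast_r(\tilde F),\CM\big)$ (Proposition~\ref{prop:K0labeling}), these integer combinations are again $KK$-classes, corresponding to the virtual representations $\chi_a$ of $\tilde F$ with (possibly negative) multiplicities $L_{a,b}$. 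Bilinearity of the Kasparov product now gives immediately
\[
[p_a]_0\times[\varphi_2]\times[\chi_b]=\sum_c L_{b,c}\,A_{c,a}=(LA)_{b,a}=\Lambda\,\delta_{a,b},
\]
which is the claimed relation.

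For the label formula I would expand the class of a projection $p\in\KM\otimes C^\ast_r(F)$ in the given basis, $[p]_0=\sum_a n_a[p_a]_0$ with $n_a\in\ZM$; these $n_a$ are by definition the $K$-theoretic labels of $p$. The previous display yields $[p]_0\times[\varphi_2]\times[\chi_b]=\Lambda\,n_b$, while $[p]_0\times[\varphi_2]=(\varphi_2)_\ast[p]_0=[\varphi_2(p)]_0$ is the class of the projection $\varphi_2(p)\in\KM\otimes C^\ast_r(\tilde F)$. Since the proof of Proposition~\ref{prop:K0labeling} only used the trace of $(\mathrm{id}\otimes\rho)(\tilde p)$, its formula is valid for an arbitrary finite-dimensional representation, and by $\ZM$-linearity for the virtual representation $\chi_b$, so that
\[
\Lambda\,n_b=[\varphi_2(p)]_0\times[\chi_b]=\frac{1}{|\tilde F|}\sum_{\tilde f\in\tilde F}\TR\big(U_{\tilde f}\,\varphi_2(p)\big)\,\TR\big(\chi_b(\tilde f)\big),
\]
with $\TR(\chi_b(\tilde f)):=\sum_c L_{b,c}\,\TR(\rho_c(\tilde f))$ read as a virtual character; dividing by $\Lambda$ is exactly \eqref{Eq:TI}.

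I do not expect a real obstacle inside this argument; it is formal once Proposition~\ref{prop:K0labeling} and the Baum--Connes input of the previous section are in hand. The two points that deserve a little care are that the classes $[\rho_b]$ pair non-degenerately with $K_0\big(C^\ast_r(\tilde F)\big)$ -- which is automatic here because each $\tilde F=F/2\ZM$ is a finite abelian group, so $C^\ast_r(\tilde F)\simeq\CM^{|\tilde F|}$ and the pairing matrix between its minimal projections and the representations $\rho_b$ is the identity -- and that the resolving classes $\chi_a$ are in general only \emph{virtual} representations, so $\TR(\chi_a(\tilde f))$ in \eqref{Eq:TI} must be read as a difference of characters. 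The genuinely computational content, namely checking the injectivity hypothesis \eqref{Eq:PreTI} for each of the four isomorphism classes of frieze groups, is a separate matter, handled in the next subsection.
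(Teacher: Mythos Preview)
Your argument is correct and follows essentially the same route as the paper: build the integer matrix $A$ (the paper's ${\rm R}$), use the injectivity hypothesis to produce a rational left inverse, clear denominators to obtain an integer matrix $L$ (the paper's $\tilde M$) with $LA=\Lambda\mathbf 1$, and define the $\chi_a$ as the corresponding integer combinations of the $\rho_b$. Your remark that the $\chi_a$ are in general only virtual representations is a useful clarification that the paper leaves implicit in its formula $\chi_b=\oplus_c\rho_c^{\tilde M_{bc}}$.
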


\begin{proof} Consider the matrix ${\rm R}$ with integer entries ${\rm R}_{ba}= \big([p_a]_0 \times [\varphi_2]\big) \times [\rho_b]$, describing a homomorphism $K_0(C_r^*(F))\simeq \ZM^{\Nn} \to \ZM^{\lvert \tilde{F}\rvert}$, where $\Nn$ is the dimension of the corresponding $K_0$-group. By assumption on the injectivity, ${\rm R}$ has a left inverse $M$ with rational entries and, as such, we can find $\Lambda \in \NM^\times$ and a matrix $\tilde M$ with integer entries such that $\sum_c \tilde M_{bc} {\rm R}_{ca} = \Lambda \, \delta_{ab}$. Setting
\begin{equation}\label{Eq:ChiTilde}
\tilde \chi_b(f) = \Lambda^{-1} \sum_{c} {\rm Tr}(\rho_c(f)) \tilde M_{bc} = \sum_{c} {\rm Tr}(\rho_c(f)) M_{bc},
\end{equation}
then, for any projection $p \sim p_a^{\otimes n_a}$, one has
\begin{align*}
\frac{1}{|\tilde F|}\sum_{\tilde f \in \tilde F}  {\rm Tr}\big(U_{\tilde f} \varphi_2(p)\big) \, \tilde \chi_b(\tilde f) &= \frac{1}{|\tilde F|} \sum_{\tilde f \in \tilde F} n_a  {\rm Tr}\big(U_{\tilde f} \varphi_2(p_a)\big) \, \tilde \chi_b(\tilde f)\\
&= \frac{1}{|\tilde F|} \sum_{\tilde f \in \tilde F} n_a  {\rm Tr}\big(U_{\tilde f} \varphi_2(p_a)\big) \, \tilde \chi_b(\tilde f)\\
&= \frac{1}{|\tilde F|} \sum_{\tilde f \in \tilde F} n_a  {\rm Tr}\big(U_{\tilde f} \varphi_2(p_a)\big) \, \sum_{c} {\rm Tr}(\rho_c(f)) M_{bc}
\\
&=\sum_{c} n_a  {\rm R}_{ca}  M_{bc} = n_a \delta_{ab}.
\end{align*}
Lastly, if we set $\chi_b = \oplus_c \rho_c^{\tilde M_{bc}}$ as the desired representations, then 
\begin{equation}
\tilde \chi_b(\tilde f) = \Lambda^{-1} \, {\rm Tr}\big(\chi_b(\tilde f)\big )
\end{equation}
and the statement follows. \end{proof}

\section{Numerical Experiments}
\label{Sec:Numerical}

\subsection{The generating dynamical matrices}\label{Sec:GenMod} One refers to a set of dynamical matrices as generating models if their band projections supply a complete basis of the $K_0$-group. Of course, one can can always start from $h_j = -p_j$ with $\{p_j\}$ a basis of the $K_0$-group, but sometime this fails because $p_j$'s may involve an infinite number of coefficients in their expansion, {\it i.e.} they don't have  finite range. The latter is essential for laboratory implementations of the models.  While this will not be the case here, let us recall that, in such cases, one simply truncates the series, making sure that the truncation does not close the spectral gap. This opening paragraph serves as a brief reminder of the well known fact that a successful computation of the $K_0$-group and of its generators is equivalent to finding all relevant topological models supported by the algebra of dynamical matrices. Any other dynamical matrix can be deformed into a  stack of elementary models from this generating set.

\vspace{0.1cm}

The present context is special because all model Hamiltonians can be generated directly from the basis of the $K_0$-groups. For example, for the group $p2mm$, the generating dynamical matrices are $h_j = -p_j$, where $p_j$'s are the projections listed in Eq.~\eqref{Eq:K0Generators}. Explicitly,
\begin{equation}\label{Eq:GenModels1}
  h_1 = -\id, \ h_2 = \tfrac{1}{2}(v_1-\id), \ h_3 = \tfrac{1}{2}(u v_1-\id), \ h_4 = -\tfrac{1}{2}(\id+v_2)
  \end{equation}
 and
  \begin{equation}\label{Eq:GenModels2}
 h_5 = \tfrac{1}{4}(v_1 v_2-v_1 -v_2 +\id), \ h_6= \tfrac{1}{4}(u v_1 v_2-uv_1 -v_2 +\id)
 \end{equation}
 
 \begin{figure}
\center
  \includegraphics[width=0.99\linewidth]{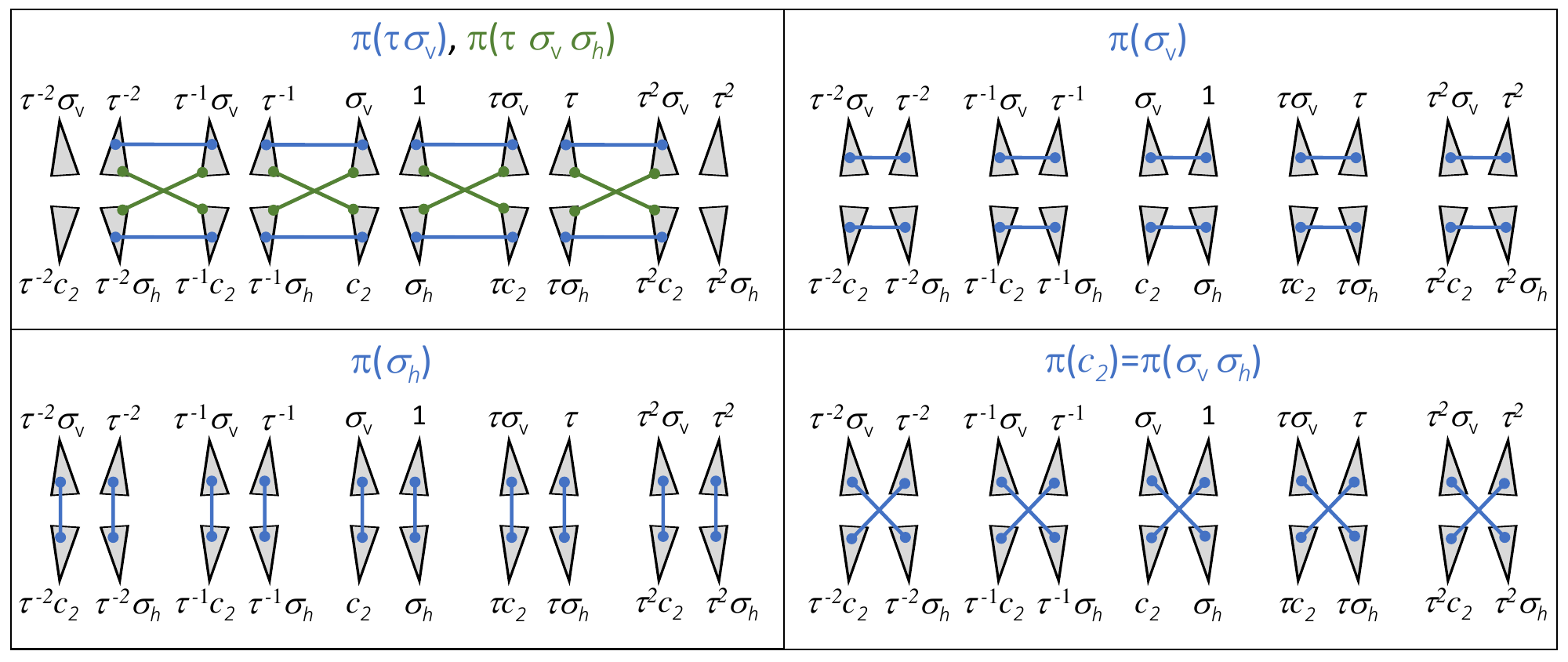}
  \caption{\small The right regular representations of key elements from $C^\ast_r(F)$, $F = p2mm$. The resonators of the patterns are labeled by the group elements of $p2mm$ and a line between two resonators labeled by $f$ and $f'$ represents the hopping operator $|f\rangle \langle f'|$. The right regular representations are the sums of hopping operators displayed in the corresponding panel. In the top-left panel, the elements and the corresponding connections are color coded.}
 \label{Fig:TermActions}
\end{figure}

\vspace{0.1cm}

Their right regular representations can be computed by repeatedly applying the following rule
\begin{equation}
\pi(u^k v^\beta) | u^n v^\alpha \rangle = |u^{n-(-1)^{\alpha_1+\beta_1}k}v^{\alpha+\beta} \rangle.
\end{equation}
These same rules can be easily adopted for the other isomorphism classes of frieze groups, by simply restricting the range of the $\alpha$ coefficients. In fact, focusing on the group $p2mm$ is enough because all the other groups can be embedded in it. In particular, we can list the generating models for the other isomorphism classes: $\{h_1\}$ for $Z_\infty$; $\{h_1,h_2,h_3\}$ for ${\rm Dih}_\infty$; $\{h_1,h_4\}$ for $Z_\infty \times {\rm Dih}_1$.

\vspace{0.1cm}

For the reader's convenience, we report in Fig.~\ref{Fig:TermActions} a graphical illustration of the right regular representations of some key algebra elements that cover all first nearest-neighbor couplings. These diagrams can guide the experimental scientists on how to create model dynamical matrices that belong to a specific topological class. For example, after examining the expressions of the projections~\eqref{Eq:K0Generators}, one should conclude that making the blue couplings in the top-left panel dominant will produce a dynamical matrix from the class of $h_3$. Similarly, for $h_2$, and $h_4$. For a dynamical matrix from the topological class of $h_5$, one should include the couplings shown in the bottom panels and in the top-right panel, hence three types of connections of approximately equal strength. For guidance, we exemplify these ideas in Fig.~\ref{Fig:ModelHamP}. Realistic resonating structures will be supplied in section~\ref{Sec:PlateResonators}. The Mathematica code used to generate Fig.~\ref{Fig:ModelHamP} is available from https://www.researchgate.net/profile/Emil-Prodan.

\subsection{Numerical algorithm for the K-theoretic labels}\label{Sec:NA} We generate finite-size models that can be handled on a computer from the group algebra $C^\ast_R(F/k\ZM)$, for some large even integer $k$. As we already stated, the subgroup $k\ZM$ is normal in $F$, hence the quotient of a self adjoint element $h$ from $F$ to $F/k\ZM$ preserves the resolvent spectrum
\begin{equation}
\RM \setminus {\rm Spec}(h) \subseteq \RM \setminus {\rm Spec}(h \cdot k\ZM).
\end{equation}
In other words, the spectral gaps of $h$ are not contaminated by our finite volume approximations.

 \begin{figure}
\center
  \includegraphics[width=0.99\linewidth]{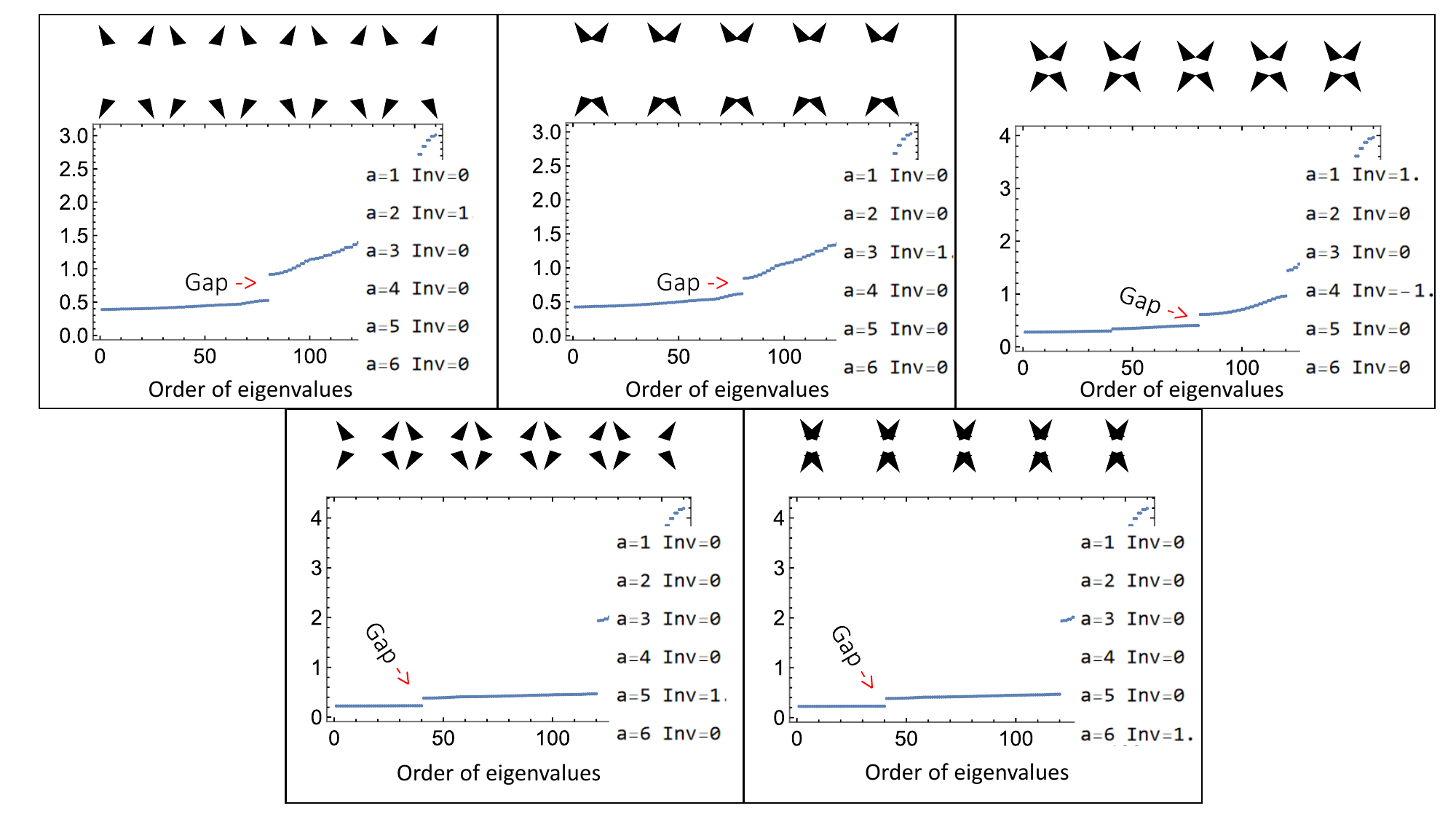}
  \caption{\small Generating dynamical matrices produced from actual patterns, with the resonators assumed to interact as $e^{-2|f\cdot 0-f'\cdot 0|}|f\rangle \langle f'|$, {\it i.e.} with a coupling strength that decays exponentially with the distance between their centers. The values of the six topological invariants associated with the bottom spectral band are displayed in each panel.}
 \label{Fig:ModelHamP}
\end{figure}

First, we need to acknowledge a few practical book keeping details. The group $F/k \ZM$ is manipulated on the computer by mapping it in a convenient model group. This model group is simply a subgroup of the permutation group of $|F/k \ZM|$ objects. For $F=p2mm$, for example, this map is implemented by labeling the group elements as
\begin{equation}
 u^n v_1^{\alpha_1} v_2^{\alpha_2} \mapsto f_j, \quad j = 4n+2\alpha_2 + \alpha_1+1, \quad n = \overline{0,k-1}, \quad \alpha_j =0,1,
\end{equation}
which gives rise to a map $F/k \ZM \to \{1,2,\ldots, |F/k \ZM|\}$. Then the group table can be encoded in a two-index array 
\begin{equation}\label{Eq:Prod}
(j,j') \mapsto {\rm Prod}_{j,j'} \in \{1,\ldots, |F/k \ZM|\},
\end{equation} 
by using the rule~\eqref{Eq:Compo1}. Obviously, the relation stated in Eq.~\eqref{Eq:Prod} assigns a unique permutation to each element $f_j \in F/k\ZM$, acting as
\begin{equation}
f_j(j') ={\rm Prod}_{j,j'}, \quad j' \in \{1,\ldots,|F/k\ZM|\}.
\end{equation}

The basis $|i,f\rangle$ of the Hilbert space corresponding to the right regular representation of $F/k \ZM$ is labeled using the scheme
\begin{equation}
|i,f_j\rangle \mapsto | w \rangle, \quad w=i+ N_s j, \quad i=\overline{1,N_s}, \quad j=\overline{1,|F/k\ZM|},
\end{equation}
and the matrix elements of the representations $\pi(f)$ are stored using the rule~\eqref{Eq:MatElem}, which is applied using the labels and the array ${\rm Prod}$ introduced above. In both rules \eqref{Eq:Compo1} and \eqref{Eq:MatElem}, the sums involving the index $n$ are computed modulo $k$. By setting $k=2$, the above schemes can and are applied to the group $\tilde F$ too.

\vspace{0.1cm}

Now, the first challenge for us is how to implement the map $\varphi_2$. The solution rests on using the spectral projections of the reduced element $u \in F/k\ZM$:
\begin{equation}
P_j = \tfrac{1}{k} \sum_{n =0}^{k-1} e^{-\imath 2 \pi j n/k} \pi(u)^n, \quad j=\overline{0,k-1}.
\end{equation}
Then we define $\varphi_2$ by projecting
\begin{equation}\label{Eq:Phi2}
F/k \ZM \ni f \mapsto \varphi_2(f) = \Pi \pi(f) \Pi, \quad \PI : = P_0 \oplus P_{k/2}.
\end{equation} 
Eq.~\eqref{Eq:Phi2} is justified by the following relations:
\begin{equation}
\pi(u^m) \Pi = \Pi \pi(u^m)= \pi(u)^{m \, {\rm mod} 2} \, \Pi, \quad \pi(v_j) \Pi = \Pi \pi(v_j), 
\end{equation}
which enable us to evaluate
\begin{equation}
\Pi \sum_\alpha \sum_{n=0}^{k-1} c(n,\alpha) \pi\big (u^n v^\alpha \big)  = \Pi \sum_{m=0,1}\sum_{\alpha}\Big (\sum_{r \in 2\ZM} c(m+r,\alpha)\Big ) \pi(u)^m \pi(v)^\alpha 
\end{equation}
and confirm that Eq.~\ref{Eq:Phi2} is equivalent with Eq.~\ref{Eq:PhiE}, once we make the identifications 
\begin{equation}
\tilde u : = \Pi \pi(u) \Pi, \quad \tilde v_j : = \Pi \pi(v_j) \Pi.
\end{equation}

\vspace{0.1cm}

\begin{figure}
\center
  \includegraphics[width=0.99\linewidth]{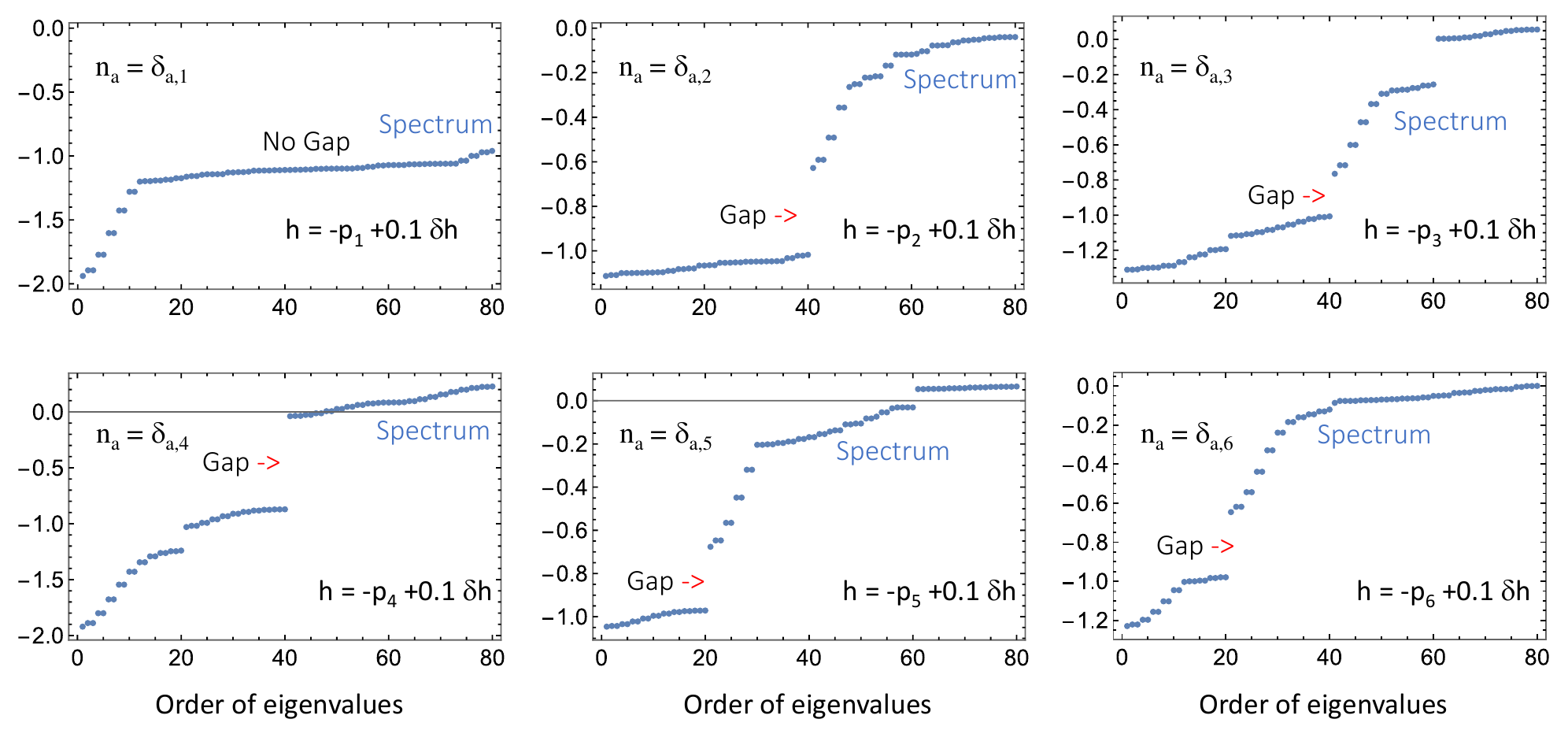}
  \caption{\small Spectrum of perturbations of the generating dynamical matrices listed in Eqs.~\eqref{Eq:GenModels1} and \eqref{Eq:GenModels2}, together with the topological invariants $n_a$, $a=\overline{1,6}$, as computed with Eq.~\eqref{Eq:NTI1} for the gap projections.}
 \label{Fig:GenModels}
\end{figure}

The second and last challenge is how to evaluate the trace ${\rm Tr}\big(U_{\tilde f} \varphi_2(p)\big)$. Switching the order of the operators inside the trace, we have
\begin{equation}
{\rm Tr}\big(\varphi_2(p) U_{\tilde f} \big) = \sum_{i=1}^{N_s}\sum_{\tilde f' \in \tilde F} \langle i, \tilde f' |  \varphi_2(p) U_{\tilde f} |i, \tilde f'\rangle = \sum_{i=1}^{N_s} \sum_{\tilde f' \in \tilde F}\langle i, \tilde f' | \varphi_2(p) |i,\tilde f \tilde f' \rangle.
\end{equation}
Next, we connect with what was said above and represent the elements $\tilde f'$ of $\tilde F$ as $\Pi f$ with $f$ from $F/k\ZM$, in which case,
\begin{equation}
{\rm Tr}\big(U_{\tilde f} \varphi_2(p)\big) = \sum_{i=1}^{N_s} \sum_{f \in F/k\ZM}\langle i, f | \Pi \pi(p) \Pi |i,\tilde f f \rangle,
\end{equation}
where the sum is taken over all elements of $F/k\ZM$. Here, $\tilde f \in F/2 \ZM$ is seen as an element of $F/k \ZM$ via the canonical section of the morphism $F/k \ZM \to F/2 \ZM$. To verify that the normalization constants are properly fixed, we have checked that ${\rm Tr}\big(U_{\tilde 1} \varphi_2(1)\big)=8$, the dimension of the right regular representation of $\tilde F$ when $N_s$ is set to one. 

\vspace{0.1cm}

We now have the numerical tools to examine the map defined in Eq.~\eqref{Eq:PreTI}. We use the numerical formula
\begin{equation}\label{Eq:NTI1}
[p_a]_0 \mapsto \frac{1}{|\tilde F|}\sum_{\tilde f \in \tilde F} \rho_b(\tilde f) \sum_{i=1}^{N_s} \sum_{f \in F/k\ZM}\langle i, f | \Pi \pi(p_a) \Pi |i, \tilde f f \rangle
\end{equation}
and the eight independent one dimensional irreducible representations of $\tilde F$ enumerated below: 
\begin{equation}
\rho : =\Big [\rho_a (\tilde f_b) \Big ]_{a,b=\overline{1,8}} = {\tiny \begin{pmatrix} 1 & 1 & 1 & 1 & 1 & 1 & 1 & 1 \\

 1 & -1 & 1 & -1 & 1 & -1 & 1 & -1 \\

 1 & 1 & 1 & 1 & -1 & -1 & -1 & -1 \\

 1 & -1 & 1 & -1 & -1 & 1 & -1 & 1 \\

 1 & 1 & -1 & -1 & 1 & 1 & -1 & -1 \\

 1 & -1 & -1 & 1 & 1 & -1 & -1 & 1 \\

 1 & 1 & -1 & -1 & -1 & -1 & 1 & 1 \\

 1 & -1 & -1 & 1 & -1 & 1 & 1 & -1  \end{pmatrix} }.
\end{equation} 
They come from the presentation of $\tilde F$ as the direct product $\langle \tilde v_1 \rangle \times \langle \tilde v_2 \rangle \times \langle \tilde u \rangle$ and from the tensor products of the standard representations of the constituent $\ZM_2$ subgroups. In particular, all $\rho_a$ representations are one dimensional. As a check, we can confirm that $\rho \cdot  \rho^T = I_{8 \times 8}$. 

\vspace{0.1cm}

The result for ${\rm R}$ is
\begin{equation}
\big [{\rm R}_{ba}\big] = \Big [[p_a]_0 \times [\varphi_2] \times [\rho_b] \Big ]_{^{b=\overline{1,8}}_{a=\overline{1,6}}}= {\tiny \begin{pmatrix} 
1 & 0 & 0 & 1 & 0 & 0 \\
 1 & 1 & 1 & 1 & 0 & 0 \\
 1 & 0 & 1 & 1 & 0 & 0 \\
 1 & 1 & 0 & 1 & 0 & 0 \\
 1 & 0 & 0 & 0 & 0 & 0 \\
 1 & 1 & 1 & 0 & 1 & 1 \\
 1 & 0 & 1 & 0 & 0 & 1 \\
 1 & 1 & 0 & 0 & 1 & 0
 \end{pmatrix}}
\end{equation}
and the rank of this matrix is six, hence equal to the dimension of the $K_0$-group. This confirms that the assumption in Theorem~\ref{Th:Main} holds. It also reassures us that the six generators $\{p_a\}$ listed in Eq.~\eqref{Eq:K0Generators} are indeed independent. Furthermore, we can produce the left inverse $M=\big ( {\rm R}^T \cdot {\rm R} \big )^{-1} \cdot {\rm R}^T$ of ${\rm R}$ and define the linear maps $\{\tilde \chi_a\}$ introduced in Eq.~\eqref{Eq:ChiTilde},
\begin{equation}
\Big [\tilde \chi_a(\tilde f_b)\Big ]_{^{a=\overline{1,6}}_{b=\overline{1,8}}} = M \cdot \rho = {\tiny \begin{pmatrix}
1 & 1 & -1 & -1 & 0 & 1 & 0 & -1 \\

 0 & -2 & 0 & -2 & 0 & 0 & 0 & 0 \\

 0 & 0 & 0 & 0 & 0 & -2 & 0 & -2 \\

 0 & 0 & 2 & 2 & 0 & 0 & 0 & 2 \\

 0 & 0 & 0 & 4 & 0 & 0 & 0 & 0 \\

 0 & 0 & 0 & 0 & 0 & 0 & 0 & 4
 \end{pmatrix} },
 \end{equation}
which deliver the desired pairings. Lastly, the K-theoretic labels of a projection $p$ are computed numerically via
\begin{equation}\label{Eq:NTI1}
n_a = \frac{1}{|\tilde F|}\sum_{\tilde f \in \tilde F} \tilde \chi_b(\tilde f) \sum_{i=1}^N \sum_{f \in F/k\ZM}\langle i, f | \Pi \pi(p) \Pi |i, \tilde f f \rangle .
\end{equation}

\begin{figure}
\center
  \includegraphics[width=0.99\linewidth]{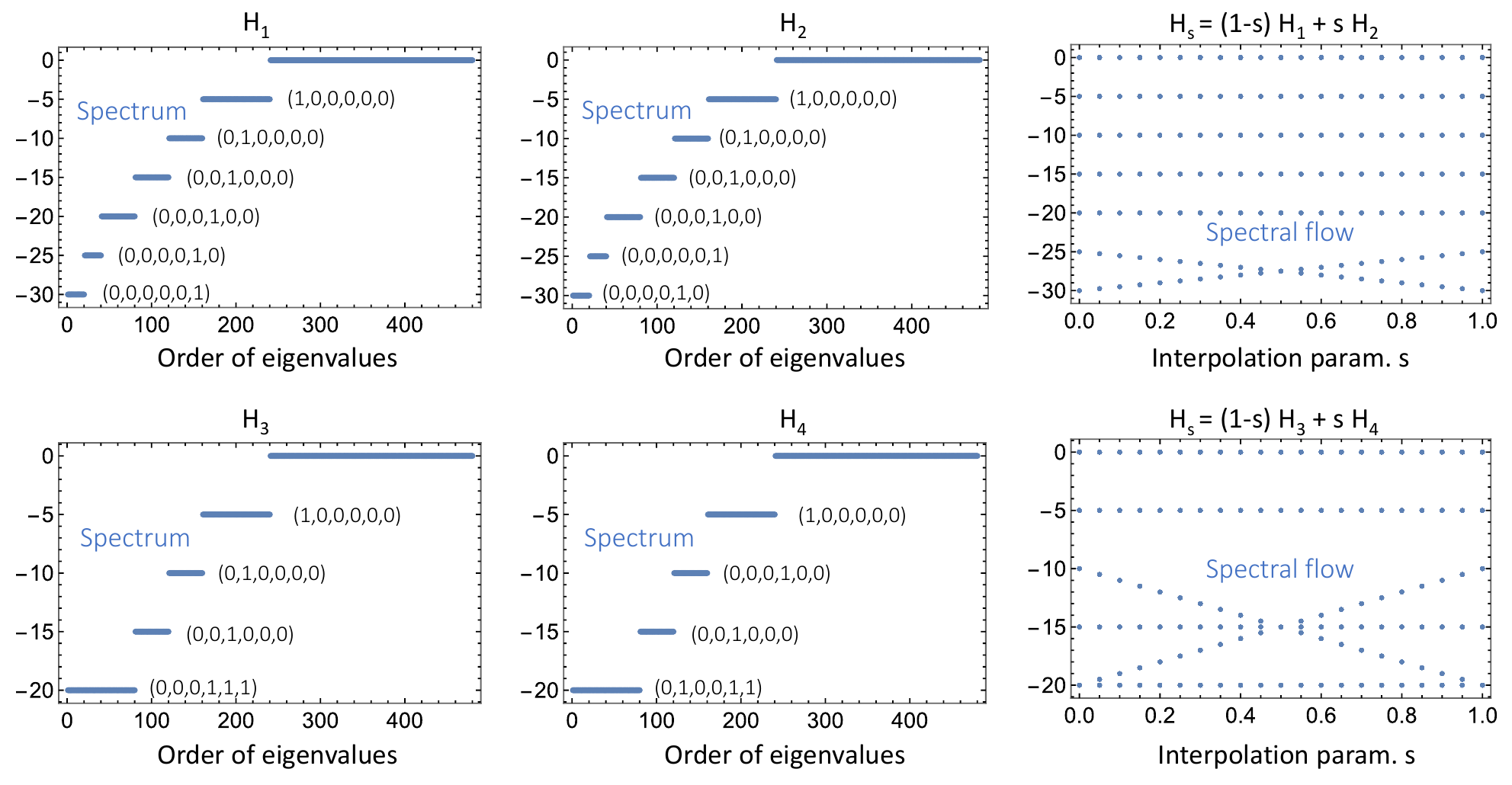}
  \caption{\small Demonstrations of topological spectral flow generated by linear interpolation between models displaying  bands with distinct K-theoretic labels. The spectra of the interpolated dynamical matrices are shown on the right, together with the six topological invariants of the seen spectral bands.}
 \label{Fig:Inter}
\end{figure}

\vspace{0.1cm}

We want to point out that this numerical formula can be canonically adapted to space groups in arbitrary dimensions, once the finite subgroup is identified.

\subsection{Computations of K-theoretic labels} We first validate our numerical algorithm by computing the K-theoretic labels for the band projections of the model dynamical matrices listed in Eqs.~\eqref{Eq:GenModels1} and \eqref{Eq:GenModels2}. To make it somewhat more challenging, we actually added a perturbation $\delta h$, which was generated by populating the series~\eqref{Eq:Elem1} with random entries up to  $|n| \leq 2$ (hence 24 random terms). Fig.~\ref{Fig:GenModels} reports the spectra of these dynamical matrices, together with evaluations of the parings~\eqref{Eq:NTI1} for the gap projections corresponding to the indicated spectral gaps. As one can see, the numerical results confirm the statement made in Eq.~\eqref{Eq:TI}. The Mathematica code used to generate Fig.~\ref{Fig:GenModels} is available from https://www.researchgate.net/profile/Emil-Prodan.

\vspace{0.1cm}

Next, we want to demonstrate observable physical effects predicted with the tools developed by our work. For this, let $e_{i}$ be the matrix with entry 1 at position $(i,i)$. We then generate Hamiltonians from $M_6(\CM) \otimes C^\ast_r(F)$ using the scheme
\begin{equation}\label{Eq:MultiLayer}
h = \sum_{i=1}^6 \Lambda_i \; e_i \otimes h_{\xi(i)}
\end{equation}
where the coefficients $\Lambda_i$ are chosen such that $h$ displays a desired number of bands and $\xi$ is a permutation of the indices, which will enable us to shuffle the bands and to engineer interesting K-theoretic labels.

\vspace{0.1cm}

Fig.~\ref{Fig:Inter} reports spectra of dynamical matrices generated with the scheme outlined above, together with the computed K-theoretic labels. Fig~\ref{Fig:Inter} also illustrates how a linear interpolation between models with different K-theoretic labels prompts the predicted topological spectral flow. The Mathematica code used to generate Fig.~\ref{Fig:Inter} is available from https://www.researchgate.net/profile/Emil-Prodan.

\section{Topological spectral flows with plate resonators}
\label{Sec:PlateResonators}

\begin{figure}
\center
  \includegraphics[width=0.99\linewidth]{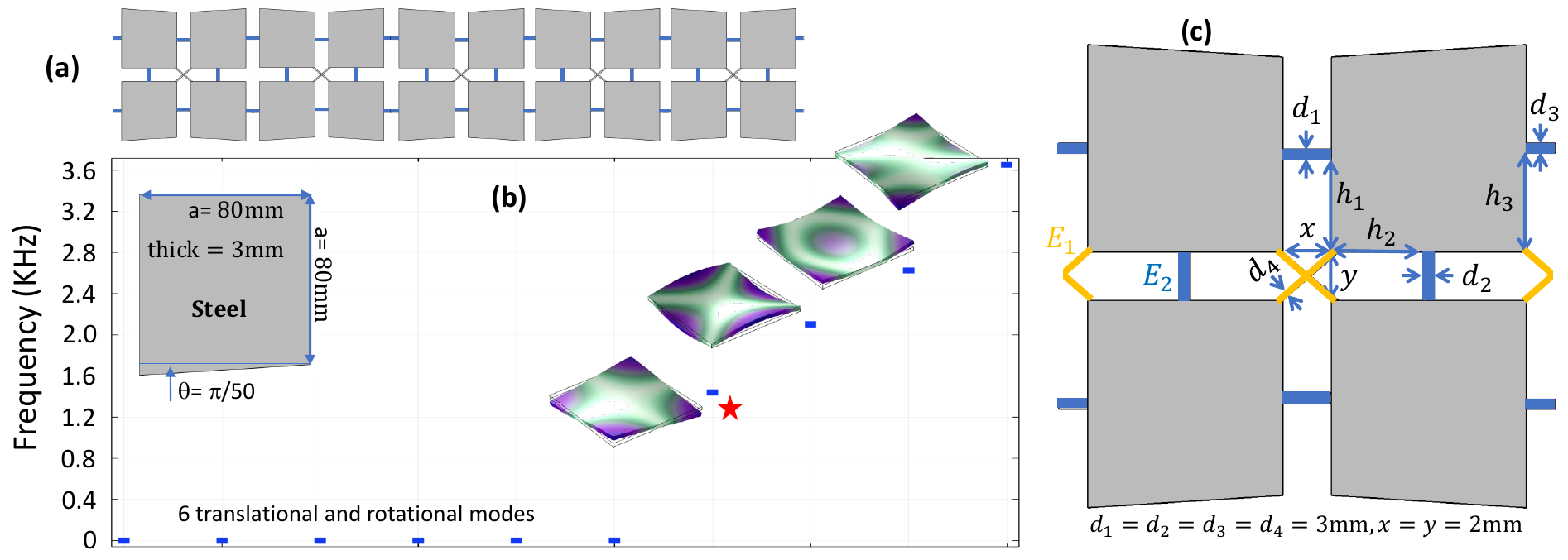}
  \caption{\small (a) A resonating structure of coupled steel plates, generated with the frieze group $p2mm$. (b) The seed resonator, together with its physical characteristics and a representation of its first 10 resonant modes. (c) Details of the couplings, notations and values of the fixed parameters.}
 \label{Fig:BD1}
\end{figure}

In this section, we illustrate how the discrete models discussed so far can be implemented with physical resonating structures. Fig.~\ref{Fig:BD1}(a) describes a pattern of coupled steel plates, generated with the frieze group $p2mm$ from one seed resonator. The basic plate resonator and its physical characteristics are presented in Fig.~\ref{Fig:BD1}(b), together with its first ten resonant modes. We focus on the dynamical features generated from the first flexural mode of the plate, indicated in Fig.~\ref{Fig:BD1}(b) by the star. Note that the geometry of the plate was chosen such that this mode is well separated in frequency from all the other resonant modes. As such, the collective dynamics of the plates in the regime of weak couplings and in a frequency domain around the frequency of this particular mode can be accurately quantified using a discrete model with one resonant mode per resonator.

\vspace{0.1cm}

 Couplings between the resonators are implemented by thin metal bridges that respect the $p2mm$ symmetry. The strength of these couplings can be controlled by the position of these bridges: A bridge placed closer to a node (anti-node) of the resonant mode produces a weaker (stronger) coupling (see the mode's spatial profile in Fig.~\ref{Fig:BD1}(b)). The strength of the couplings can be further adjusted by modifying the Young modulus $E$ of the metal bridges. Thus, our space of adjustable parameters consists of five parameters $(h_1,h_2,h_3,E_1,E_2)$ (see Fig.~\ref{Fig:BD1}(c)).

\begin{figure}
\center
  \includegraphics[width=0.5\linewidth]{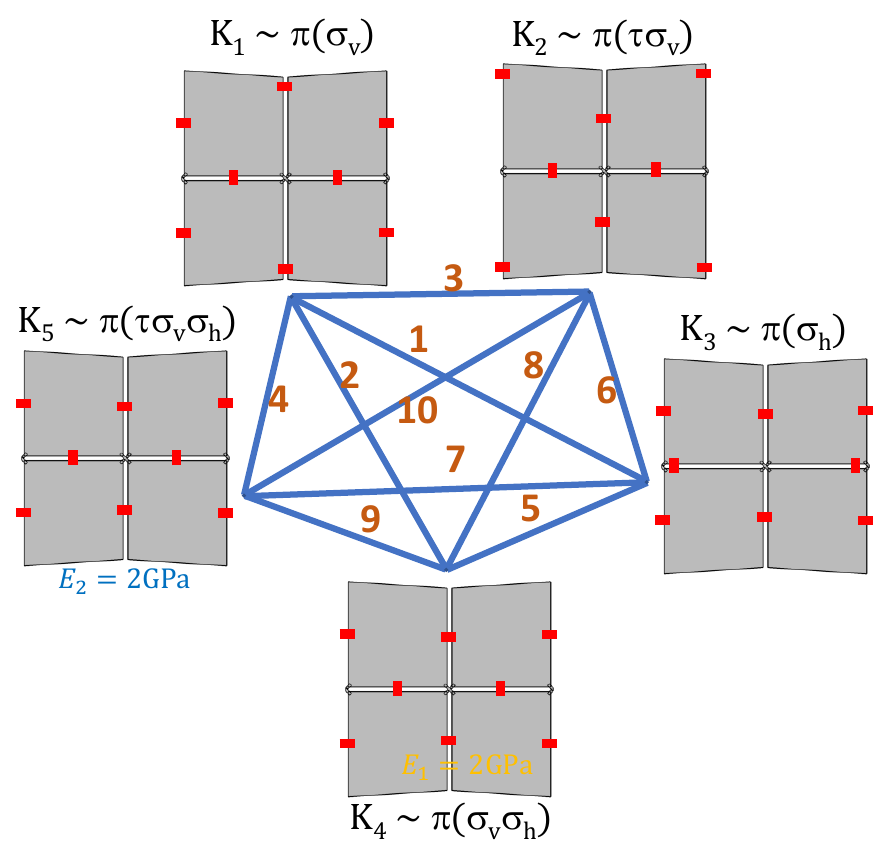}
  \caption{\small Configurations of the couplings generating dynamical matrices from the topological class of the algebra elements mentioned in Fig.~\ref{Fig:TermActions}. The Young modulus of the metal bridges was fixed at $E_1 = E_2=0.1$GPa, except for the two cases indicated in the diagram. The diagram also shows and labels the ten distinct ways in which the physical systems can be interpolated.}
 \label{TenClass}
\end{figure}

\vspace{0.1cm}

In Fig.~\ref{Fig:ModelHamP}, we indicated how the resonators are to be coupled in order to reproduce the fundamental model dynamical matrices. However, in practice, we found that it is much easier to implement dynamical matrices $K_i$, $i = \overline{1,5}$, from the topological classes of the self-adjoint elements presented in Fig.~\ref{Fig:TermActions}. Indeed, for each of those elements, we only need to ensure one specific dominant coupling and this can be achieved with the configurations shown in Fig.~\ref{TenClass}. We have computed the K-theoretic labels of the lowest spectral bands of $K_i$ dynamical matrices, based on the assumption that they indeed belong to topological class of the algebra elements indicated in Fig.~\ref{TenClass}. The results are: $K_1 \rightarrow (0,1,0,0,0,0)$, $K_2 \rightarrow (0,0,1,0,0,0)$, $K_3 \rightarrow (1,0,0,-1,0,0)$, $K_4 \rightarrow (1,1,0,-1,-2,0)$, $K_5 \rightarrow (1,0,1,-1,0,-2)$,
 hence they are all distinct from each other. The Mathematica code used to generate these topological labels is available from https://www.researchgate.net/profile/Emil-Prodan.
 
\vspace{0.1cm}

\begin{remark}{\rm Among other things, the above K-theoretic labels assures us that actually $K_i$ together with $K_0 = 1$ can serve as an alternative generating set of model dynamical matrices. In many respects, this set is more natural than the one we originally considered.
}$\Diamond$
\end{remark}

Each of the resonating structures seen in Fig.~\ref{TenClass} corresponds to a point in the 5-dimensional parameter space $(h_1,h_2,h_3,E_1,E_2)$. Based on the K-theoretic labels computed above, we predict that, no matter how we interpolate between two of these five points, we cannot avoid the closing of the gap in the resonant spectrum. This phenomenon should be extremely robust because the K-theory predicts that the gap closing cannot be avoided even if the structure is connected to external frames, provided this connection does not change the topological class of the initial and final resonating structures. Furthermore, we conjecture that the topological spectral flows are robust against disorder.

\vspace{0.1cm}

The five topologically distinct resonating structures from Fig.~\ref{TenClass} can be pairwise interpolated in ten different ways, as specified there. Each of these interpolations will produce distinct topological spectral flows that can find different practical applications. In Fig.~\ref{Fig:ESF}, we report the COMSOL-simulated resonant spectra for the ten interpolations between different $(i,j)$ pairs from Fig.~\ref{TenClass}, where the parameters were varied linearly as
\begin{equation}\label{Eq:Interp}
\begin{aligned}
\big 
(h_1(s) ,h_2(s), h_3(s), E_1(s), E_2(s)\big)  = & (1-s)(h_1^i ,h_2^i, h_3^i, E_1^i, E_2^i) \\
& +s(h_1^j ,h_2^j, h_3^j, E_1^j, E_2^j).
\end{aligned}
\end{equation}
These simulations fully confirm the K-theoretic predictions and serve as a demonstration of how spectral engineering can be achieved with plate resonators generated by space groups.

\begin{figure}
\center
	\includegraphics[width=0.99\linewidth]{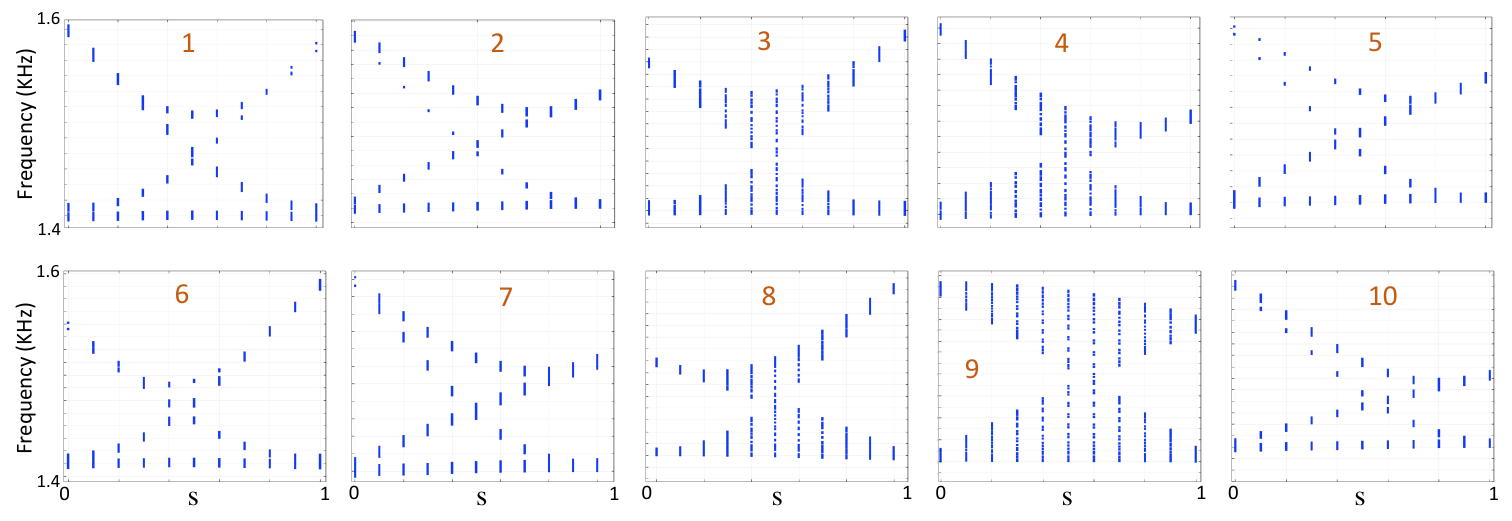}
	\caption{\small (a) Spectral flow of the resonant spectrum when the configuration of the coupled plate resonators is deformed (see Eq.~\eqref{Eq:Interp}) along the ten distinct interpolations shown in Fig.~\ref{TenClass}.}
	\label{Fig:ESF}
\end{figure}


\end{document}